\newcommand{\sref}[1]{Section~\ref{#1}}
\newcommand{\eg}{e.g.,~}
\newcommand{\ie}{i.e.,~}
\long\def\symbolfootnote[#1]#2{\begingroup%
	\def\thefootnote{\fnsymbol{footnote}}\footnote[#1]{#2}\endgroup}
\newtheorem{theorem}{\sffamily Theorem}
\newtheorem{lemma}[theorem]{\sffamily Lemma}
\newtheorem{corollary}[theorem]{\sffamily Corollary}
\DeclareMathOperator{\trace}{trace}
\title{Global Topology of 3D Symmetric Tensor Fields}
\author{
Shih-Hsuan Hung,
Yue Zhang, \textit{Member,~IEEE},
Eugene Zhang, \textit{Senior Member,~IEEE}
}
\abstract{
There have been recent advances in the analysis and visualization of 3D symmetric tensor fields, with a focus on the robust extraction of tensor field topology. However, topological features such as degenerate curves and neutral surfaces do not live in isolation. Instead, they intriguingly interact with each other. In this paper, we introduce the notion of {\em topological graph} for 3D symmetric tensor fields to facilitate global topological analysis of such fields. The nodes of the graph include degenerate curves and regions bounded by neutral surfaces in the domain. The edges in the graph denote the adjacency information between the regions and degenerate curves. In addition, we observe that a degenerate curve can be a loop and even a knot and that two degenerate curves (whether in the same region or not) can form a link. We provide a definition and theoretical analysis of individual degenerate curves in order to help understand why knots and links may occur. Moreover, we differentiate between wedges and trisectors, thus making the analysis more detailed about degenerate curves. We incorporate this information into the topological graph. Such a graph can not only reveal the global structure in a 3D symmetric tensor field but also allow two symmetric tensor fields to be compared. We demonstrate our approach by applying it to solid mechanics and material science data sets.
} 
\keywords{Tensor field visualization, 3D symmetric tensor fields, global tensor field topology, topological graphs, degenerate curves, neutral surfaces, wedges and trisectors}
\begin{document}


\firstsection{Introduction}\label{sec:introduction}
\maketitle
Tensor fields are widely used in solid mechanics and material science. In these domains, the topological features of the stress tensor field have explicit physical meanings. For example, degenerate curves represent uniaxial extension and compression while neutral surfaces represent pure shear~\cite{CRISCIONE:00}.

There have been some recent advances in the topological analysis and visualization of 3D symmetric tensor fields~\cite{Zheng:04,Zheng:05a,Palacios:16,Roy:19,Qu:21}, which not only introduce the topology of such fields but also provide robust algorithms to extract individual topological features.

However, topological features in a tensor field do not live in isolation. Instead, there are intricate relationships among degenerate curves and neutral surfaces. For example, a degenerate curve can form a loop and even a knot. Two degenerate curves can form a link (see Figure~\ref{fig:teaser} for examples). In addition, neutral surfaces can divide the domain into regions, inside each of which the stress tensor field has a uniform behavior of either {\em extension-dominant} (linearity) or {\em compression-dominant} (planarity). A region can contain degenerate curves and other regions in its interior, thus having a complicated topological structure (Figure~\ref{fig:teaser}).

\begin{figure}[!t]
	\centering
	\begin{overpic}[width={\columnwidth}]{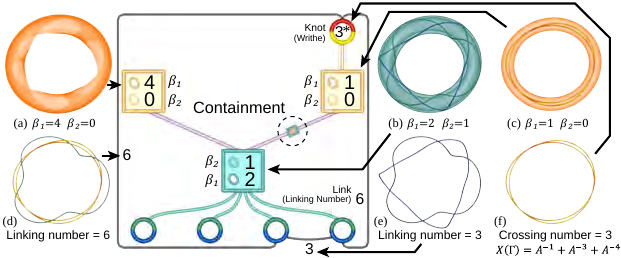}
	\end{overpic}
	\caption{The components of our topological graph. Cyan and yellow squares represent linear and planar regions, respectively, with their Betti numbers written inside. A pair of adjacent planar and linear regions are connected by an edge in the graph. The rectangular glyph on an edge indicates that one of the regions in the pair is inside the other region. Degenerate curves are the second type of nodes in the graph, represented as circular rings (open for open curves and closed for closed curves). Their Writhe numbers are computed and visualized. The linking numbers between linked pairs of degenerate curves are also included in the visualization.
	}
	\label{fig:graph}
\end{figure}

In this paper, we introduce a topological graph for 3D symmetric tensor fields (Figure~\ref{fig:teaser}: right). The nodes of the graph consist of degenerate curves as well as uniformly linear regions (extension-dominant) and uniformly planar regions (compression-dominant). An edge in the graph can indicate a pair of adjacent regions, a region containing a degenerate curve, or a pair of linked degenerate curves. In addition, unlike scalar fields whose topological features consist of isolated points, in 3D tensor fields the topology includes both curves (degenerate curves) and volumes (regions). Thus, a topological feature can have a non-trivial topology, such as a knotted degenerate curve and a region that contains multiple air bubbles (other regions). Consequently, we compute a number of characteristics such as the {\em knottiness} of a degenerate curve ({\em Writhe number}) and the homology of a region (Betti numbers). Similarly, an edge in the graph can also indicate a rather complicated relationship between two nodes, such as the linking of two degenerate curves. We compute the linking number for such an edge. Figure~\ref{fig:graph} provides a more detailed annotation of this graph. The definition of Betti numbers is given in Section~\ref{sec:topological_graphs}, and the definitions of the Writhe number, the linking number, and the Jones polynomials are given in Section~\ref{sec:graph_construction}.

Our topological graph provides a holistic view of the topological features in the field, leading to insight that is difficult to obtain by analyzing individual features in isolation. As an example, Figure~\ref{fig:teaser} shows the topological features of the stress tensor field inside an O-ring simulation data using existing topology-driven visualization approaches (Figure~\ref{fig:teaser} (left)). While it is clear from the visualization that there are degenerate curves and neutral surfaces, it is difficult to see why the degenerate curves appear in these locations. The topology of the neutral surface is also hard to discern from the visualization. With our topological graph (Figure~\ref{fig:teaser}: right), we see a linear region (cyan square) sandwiched by two planar regions (yellow squares). By selecting their nodes in the graph, we can see the actual regions (Figure~\ref{fig:teaser} (a)-(c)). It can be observed that, as the boundary force pushes the material towards the core of the O-ring, there is a linear region ((b): green region) that indicates extension. This extension in the material leads to strong compression at the core of the O-ring ((c): the yellow region), where the compressed material has nowhere to go. Consequently, an equilibrium of uniaxial compression appears in the form of a degenerate curve ((f): yellow). Due to the presence of a compression load everywhere on the boundary, the equilibrium cannot reach the boundary, thus forcing the degenerate curve into a loop. In addition, the three-fold rotational symmetry in the boundary loading condition, a global property, is captured by the shapes of the two degenerate curves ((d) and (e)) in the linear region (b: region) as well as the six crossing points between the two {\em linked} curves. Such global analysis is difficult to obtain by only considering individual degenerate curves and neutral surfaces.

\begin{figure}[!t]
	\centering
	\begin{overpic}[width={\columnwidth}]{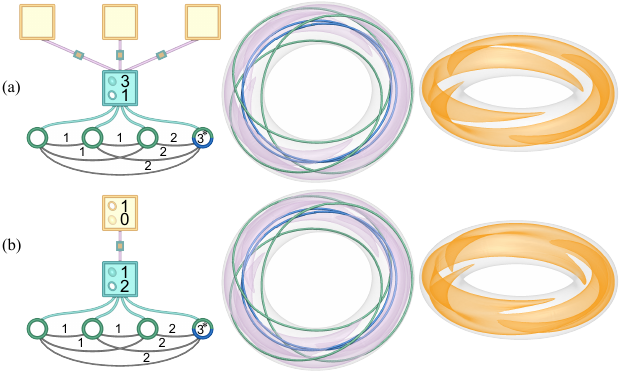}
	\end{overpic}
	\caption{This figure compares two stress tensor fields where the boundary load for (b) is $7\%$ higher than that for (a). The right column shows the boundary of the compression-dominant regions (yellow squares in the graphs).
	}
	\label{fig:global_l}
\end{figure}

Our topological graph also allows two tensor fields to be compared, such as the stress tensor fields in the O-ring given  two different boundary loading conditions (Figure~\ref{fig:global_l}). The middle column shows the degenerate curves and neutral surfaces extracted using existing techniques~\cite{Roy:19,Qu:21}. Notice that the visualizations of the two fields look similar. However, their topological graphs show both similarities and differences. On the one hand, both fields contain a linear region (green squares in the graphs) which contains four degenerate loops. Furthermore, the four degenerate curves are linked with the same linking numbers, and one of the loops is a trefoil knot. This indicates that, despite the difference in the magnitude of the loading conditions for the two scenarios, the stress at the core of the O-ring does not differ significantly. On the other hand, the field in (a) has three compression-dominant regions, all of which are contractible. In contrast, the field in (b), which has a larger load, has the three regions merged into one and is no longer contractible. This indicates that compression area on the boundary is much larger due to the increased load.

With topological graphs, global analysis of tensor fields is enabled which has the potential of shedding new insight to domain scientists than showing only the visualizations (Figure~\ref{fig:teaser} (left) and Figure~\ref{fig:global_l} (middle)).

Another important aspect of this paper is the differentiation of wedge-type degenerate points from trisector-type degenerate points in 3D symmetric tensor fields. Since their introduction in the early $2000$s by Zheng et al.~\cite{Zheng:05b}, there has been relatively little additional research on the topic and application to engineering data. Most research focuses on the linearity/planarity aspect of degenerate points. However, such an approach can lead to incomplete or misleading interpretation of the data. For example, as shown Figure~\ref{fig:cmp} (a), the linear trisector loop (the blue loop) is close to the maximal compression force on the boundary, where material is pushed away, and the wedge loop (the green loop) is close to the minimal compression force on the boundary, where the material flows to but is stopped by the O-ring (thus a dead-end). In Figure~\ref{fig:cmp} (b), the wedge/trisector classification is unaccounted for, and it is no longer clear why two linear degenerate loops (green) appear there and how they relate to each other.

\begin{figure}[!t]
	\centering
	\begin{overpic}[width={0.8\columnwidth}]{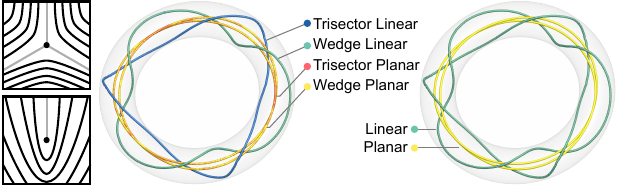}
		\put( 60, -7) {\small(a)}
		\put(160, -7) {\small(b)}
	\end{overpic}
	\caption{The linear wedge curve ((a): green) and the linear trisector curve ((a): blue) allow us to identify the places where the materials are being pushed out (near the trisector curve) and where the materials are being pushed into a dead-end (near the wedge curve). Not differentiating trisectors from wedges (b) makes this insight unavailable and can lead to incomplete interpretation.
	}
	\label{fig:cmp}
\end{figure}

\begin{figure}[!t]
	\centering
	\begin{overpic}[width={0.9\columnwidth}]{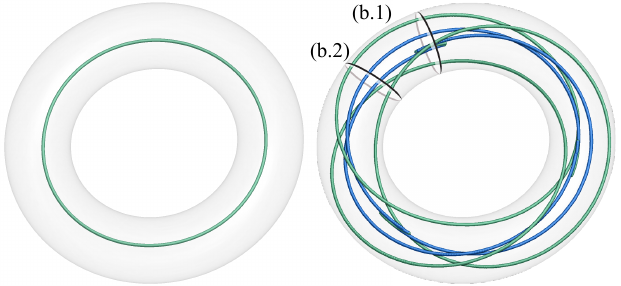}
		\put( 52, -7) {\small(a)}
		\put(165, -7) {\small(b)}
	\end{overpic}
	\caption{
		In an O-ring simulation scenario where the magnitude of the load is constant on the boundary (a), there is a single degenerate loop at the core of the O-ring. When the magnitude of the load at the boundary becomes anisotropic, additional degenerate curves results. Yet, on each cut plane (e.g., b.1 and b.2), the difference between the number of wedge points and the number of trisector points is $1$, which is the same as the field in (a).
	}
	\label{fig:global_3}
\end{figure}

Enabled by the wedge/trisector differentiation, we consider two O-ring simulation scenarios (Figure~\ref{fig:global_3}). In (a), the load has a constant magnitude on the boundary, which leads to a single degenerate loop (green) that is the equilibrium in the stress tensor field. When the load's magnitude becomes anisotropic but periodic (b), additional degenerate loops appear that include both wedge curves (green) and trisector curves (blue). It is interesting to observe that on each cross section of the O-ring, there are either $4$ wedges and $3$ trisectors (b.1), or $3$ wedges and $2$ trisectors (b.2). The difference between the number of wedges and the number of trisectors is one. This seems to suggest that the total tensor index on the intersection between each cut plane and the O-ring is constant in this case. We hypothesize that tensor field topology is constrained by the boundary loading condition as well as the topology of the O-ring, similar to the Poincar\'e-Hopf theorem that states the total singularity index of a vector field is determined by the Euler characteristic of the domain. Such insight cannot be achieved by only considering individual topological features and without differentiating between wedges and trisectors.

To evaluate the effectiveness of the topological graph as well as the additional characterization of wedge/trisector degenerate points, we apply these ideas to a number of simulation data sets with applications in solid mechanics and material science. To summarize, in this paper we make the following contributions:

\begin{itemize}
\item We introduce the notion of {\em topological graphs} for 3D symmetric tensor fields.
\item As part of the graph we introduce new topological features such as linear regions and planar regions, which are nodes in the graph. We also propose to measure the complexity of these regions by computing their homology in the form of Betti numbers.
\item We observe that degenerate curves can be knotted and be linked to each other. This information is captured by computing the curve characteristics such as the Writhe numbers and Jones polynomials.
\item We advocate the addition of the wedge/trisector classification of degenerate curves to physical interpretations. Furthermore, we introduce the index of degenerate curves that can be used to differentiate between wedges and trisectors.
\end{itemize}

\section{Related Work}
\label{sec:related_work}

Tensor field visualization is an important area of research that has seen waves of advances over the past decades~\cite{EPFL-BOOK-138668,Kratz:13}. Topology-driven analysis and visualization of tensor fields have found many applications in understanding solid and fluid mechanics data as well as material science. Inspired by the use of vector field topology in fluid dynamics, Delmarcelle and Hesselink~\cite{delmarcelle:visualizing} introduce the notion of tensor field topology of 2D symmetric tensor fields in terms of degenerate tensors with repeating eigenvalues. To further understand the topological features, Leeuw and van Liere~\cite{de:1999:collapsing} propose a topological graph with the relationship of the degenerate points. Jankowai et al.~\cite{jankowai:2019:robust} introduce a diagram of the degenerate points with a tree structure based on the robustness of a tensor field near the degenerate points.

The concept of the tensor field topology of 2D symmetric tensor fields is extended to 3D symmetric tensor fields by Hesselink et al.~\cite{hesselink:topology}, with a focus on understanding the behaviors of tensor fields near triple degenerate points, where the tensor field has a value of a multiple of the identity tensor. Later, using dimensionality analysis, Zheng and Pang~\cite{Zheng:04} point out that triple degenerate points are structurally unstable as they can disappear under an arbitrarily small perturbation to the tensor field. Given the instability in triple degenerate points, Zheng and Pang define the topology of a 3D symmetric tensor field in terms of {\em double degenerate points}, where the tensor field has two equal eigenvalues (repeating) and a third, distinct eigenvalue (non-repeating). Such features not only are structurally stable but also form curves. Zheng et al.~\cite{Zheng:05a} extract degenerate curves by using numeric methods on the levelsets of a degree-six discriminant. Improving on this technique, Tricoche et al.~\cite{Tricoche:08} point out that degenerate curves are part of the ridge and valley lines of a tensor invariant, {\em tensor mode}. Thus, they extract degenerate curves more robustly by reusing techniques for extracting ridge and valley lines. Palacios et al.~\cite{Palacios:17} identify a number of degenerate curve editing operations for 3D symmetric tensor fields.

Palacios et al.~\cite{Palacios:16} introduce the notion of neutral tensors and incorporate neutral surfaces into tensor field topology. Roy et al.~\cite{Roy:19} provide a parameterization for the set of degenerate tensors and a parameterization for the set of neutral tensors in a 3D piecewise linear tensor field, which they use to robustly extract degenerate curves and neutral surfaces at any given accuracy. Qu et al.~\cite{Qu:21} extend these parameterizations to seamlessly and robustly extract mode surfaces (an extension of degenerate curves and neutral surfaces).

Despite these advances, the understanding of 3D symmetric tensor field topology is still rather fragmented in that the topology is treated as a collection of isolated curves and surfaces. Relatively little attention is given to the topological structure of individual objects (\eg whether a degenerate curve is a loop and forms a knot) and the interactions among these objects (\eg whether two degenerate curves form a link). We address these with a topological graph that provides a more complete and global picture of the tensor fields, to be described next.

In their pioneering research, Zheng et al.~\cite{Zheng:05b} define the notions of wedges and trisectors for 3D symmetric tensor fields. They show that near a degenerate point, the projection of the tensor field onto the repeating plane at the point exhibit 2D degenerate point patterns such as wedges and trisectors. They further point out that between the wedge segments and trisector segments along a degenerate curve are transition points whose dominant eigenvectors are perpendicular to the tangent of the degenerate curve. Since then, there has been relatively little follow-up research on wedges/trisectors for visualization applications to engineering datasets. Zhang et al.~\cite{ZhangY:17b} study the physical meanings of wedges and trisectors in {\em 2D} symmetric tensor fields. In this paper, we address the issue by studying how the wedge/trisector classification can aid both local and global 3D tensor field analysis.

3D linear tensor fields are the simplest tensor fields as the tensor values are linear with respect to the coordinates. There has been some exploration on the topology of such fields, including the findings that there are at most four degenerate curves~\cite{ZhangY:17a} and at most eight transition points~\cite{ZhangY:20} in such fields.

There has been work on using a topological graph for 2D vector fields and tensor fields, such as the {\em Morse Connection Graphs} for vector fields~\cite{Chen:07} and the {\em eigenvalue graphs} and {\em eigenvector graphs} for 2D asymmetric tensor fields~\cite{Lin:12}. Tao et al.~\cite{tao2017semantic} apply graph analysis techniques from information visualization to flow visualization by introducing the notion of {\em semantic flow graphs}. The nodes of the graphs can be the aggregations of streamlines, regions of certain characteristics, and singularities in the field. In our work, we focus on topological features in the field such as degenerate curves and regions bounded by neutral surfaces. Hyperstreamlines are not part of the graph. In addition, each such feature is given its own node in the graph.

\section{Mathematical Background}
\label{sec:math_background}
We first review relevant concepts and results regarding 3D symmetric tensor field topology~\cite{Zheng:04,Zheng:05a,Palacios:16,Roy:19,Qu:21}. A 3D symmetric tensor has three real-valued eigenvalues:
\begin{inparaenum}[(i)]
	\item the {\em major eigenvalue} (the largest),
	\item the {\em medium eigenvalue}, and
	\item the {\em minor eigenvalue} (the smallest).
\end{inparaenum}
Eigenvectors corresponding to the major eigenvalue are referred to as the {\em major eigenvectors}. We can also define {\em medium eigenvectors} and {\em minor eigenvectors} in a similar fashion.

The sum of the eigenvalues is the {\em trace} of the tensor, while the product of the eigenvalues is the {\em determinant}. Treating the eigenvalues as a vector, its vector magnitude is used to define the {\em magnitude} of the original tensor, which is the Frobenius norm of the tensor. Given a tensor $T$, its trace is denoted by $\trace{(T)}$, its determinant by $|T|$, and its magnitude $\Vert T \Vert$.

A tensor $T$ can be uniquely decomposed into the sum of $D$, a multiple of the identity tensor, and $A$, a symmetric tensor with a zero trace.
Here, $A$ is referred to as the {\em deviator} of $T$.
Note that $T$ and $A$ have the same set of eigenvectors.
That is, a vector $v$ is an eigenvector of $T$ if and only if $v$ is an eigenvector of $A$.
Consequently, when discussing topological properties of a tensor field where eigenvector analysis is the central theme, it is usually sufficient to focus on its deviator $A$.
For example, the {\em mode} of a tensor $T$ is defined in terms of the magnitude and determinant of its deviator as
 $\mu(T) = 3\sqrt{6}\frac{|A|}{||A||^3}$.

The modes of 3D symmetric tensors have a range of $[-1, 1]$. When $\mu(T) > 0$, the deviator $A$, whose eigenvalues sum to zero, has one positive eigenvalue and two negative eigenvalues. The positive eigenvalue, \ie the major eigenvalue, is referred to as the {\em dominant eigenvalue}~\cite{Qu:21}. The major eigenvectors are referred to as the {\em dominant eigenvectors}. In solid mechanics, this case corresponds to a volume-preserving deformation with two principal axes of compression (negative eigenvalues) and one principal axis of extension (positive eigenvalue). Such tensors are referred to as {\em linear tensors}. When $\mu(T)=1$, the two negative eigenvalues are equal, indicating isotropic compression in the plane perpendicular to the principal axis of extension (eigenvector corresponding to the positive eigenvalue). Note that a tensor with at least two equal eigenvalues is referred to as being a {\em degenerate tensor}. Thus, a tensor of mode $1$ is a linear degenerate tensor, which corresponds to {\em uniaxial extension} in solid mechanics~\cite{CRISCIONE:00}. Similarly, when $\mu(T) <0$, the deviator has two positive eigenvalues and one negative eigenvalue, indicating two principal axes of extension and one principal axis of compression. Such tensors are referred to as {\em planar tensors}. The negative eigenvalue, \ie the minor eigenvalue, is the {\em dominant eigenvalue} in this case. Consequently, the minor eigenvectors are the dominant eigenvectors. When $\mu(T)=-1$, $T$ is a planar degenerate tensor and corresponds to {\em uniaxial compression} in solid mechanics~\cite{CRISCIONE:00}. When $\mu(T)=0$, its deviator $A$ has one positive eigenvalue, one zero eigenvalue, and one negative eigenvalue that has the same magnitude as the positive eigenvalue. In this case, $T$ is referred to as being a {\em neutral tensor}. In solid mechanics, a neutral tensor corresponds to {\em pure shear}~\cite{CRISCIONE:00}.

A tensor field is a tensor-valued function over its domain. The topology of a tensor field consists of its degenerate points (where the tensor value is a degenerate tensor) and neutral points (where the tensor value is a neutral tensor). Under structurally stable conditions, the set of degenerate points forms curves (degenerate curves), and the set of neutral points forms surfaces (neutral surfaces). Figure~\ref{fig:tensor_topology_illustration} (a) shows a tensor field with its degenerate curves (the colored curves) and neutral surfaces (chartreuse). Along a degenerate curve, the degenerate points are either all linear (green or blue) or all planar (yellow or red). We refer to this as the {\em linearity/planarity classification} of degenerate points. Note that the neutral surface divides the domain into linear regions, where linear degenerate curves reside, and planar regions, where planar degenerate curves reside. Between degenerate curves and neutral surfaces are {\em mode surfaces} (Figure~\ref{fig:tensor_topology_illustration} (a): the cyan and orange surfaces), which are the isosurfaces of the tensor mode. A linear region is thus a connected component of the union of all positive mode surfaces, and a planar region is a connected component of the union of all negative mode surfaces.

\begin{figure}[!t]
	\centering
	\begin{overpic}[width={0.7\columnwidth}]{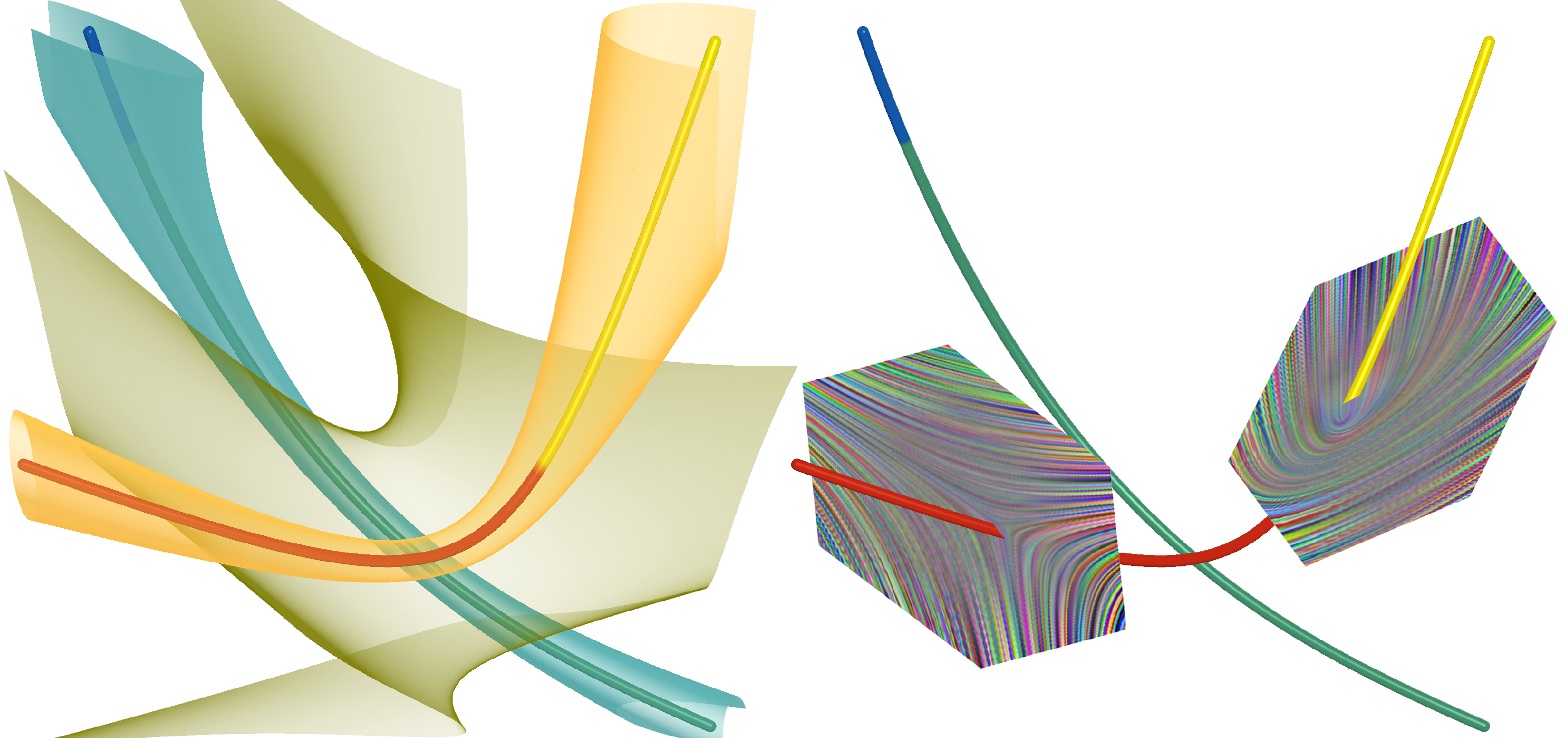}
		\put( 45, -7) {\small(a)}
		\put(130, -7) {\small(b)}
	\end{overpic}
	\caption{Degenerate curves ((a): the colored curves) and neutral surfaces ((a): the chartreuse surface) are both special mode surfaces. Additional mode surfaces (orange and cyan surfaces) occur between topological features.  Along a degenerate curve (b), the projected tensor fields onto the repeating planes show 2D degenerate patterns such as the wedge (the plane intersecting the yellow curve segment) and the trisector (the plane intersecting the red curve segment).
	}
	\label{fig:tensor_topology_illustration}
\end{figure}

In addition to the linearity/planarity classification, a degenerate point can be further classified based on the tensor index of the tensor field projected onto the repeating plane, i.e., the plane perpendicular to the dominant eigenvector at the degenerate point. A degenerate point in a 3D symmetric tensor field is referred to as a {\em wedge} if the same point is a wedge in the projected 2D tensor field (Figure~\ref{fig:tensor_topology_illustration} (b): the plane intersecting the yellow curve segment). Similarly, a trisector degenerate point in the 3D tensor field is also a trisector degenerate point in the 2D projected tensor field onto the repeating plane (Figure~\ref{fig:tensor_topology_illustration} (b): the plane intersecting the red curve segment). The wedge/trisector classification along a degenerate curve is not always the same. The points between a wedge segment and a trisector segment along a degenerate curve ((Figure~\ref{fig:tensor_topology_illustration} (b): between yellow and red segments or between green and blue segments) are referred to as the {\em transition points}. Combining the two classifications, a degenerate point can be classified as a linear wedge (green), a linear trisector (blue), a planar wedge (yellow), a planar trisector (red), a linear transition point (between green and blue segments), or a planar transition point (between yellow and red segments).

\section{Topological Graphs}
\label{sec:topological_graphs}

In this section, we provide more detail on the various components of our topological graph as well as the motivation behind our visual design.

\subsection{Regions of Uniform Linearity/Planarity}

When crossing from a linear region into the planar region via the neutral surface, the dominant eigenvector field switches from the major eigenvector field to the minor eigenvector field, with discontinuity. Consequently, we consider a decomposition of the domain into connected regions of purely linear tensor behaviors and purely planar tensor behaviors. Any pair of adjacent regions must consist of one linear region and one planar region, separated by the neutral surface.

Linear regions and planar regions have vastly different physical behaviors, and their interplay is a reflection and direct result of the boundary condition of the simulation, the shape of the domain, and the distribution of the material. A large linear region may have many small pockets of planar regions inside. In solid mechanics, this can indicate a nonuniform distribution of material deformation behavior. To quantify whether this nonuniform material behavior could affect product life is of prominent interest to design engineers. A planar region may border a linear region through multiple sheets of neutral surfaces, indicating the complex topology of the computational domain such as a mechanical part with multiple handles. Each such region can have complex geometric and topological structures as shown in Figure~\ref{fig:graph}. These structures reflect the behaviors of the underlying tensor fields, which we wish to capture. One measure for the topological complexity of a region $R$ is its {\em homology}~\cite{Kaczynski:2004}, which consists of a family of {\em groups} $\{ H_i(R) \| i \in \mathbb{Z}, i\ge 0\}$. Geometrically speaking, each generator of $H_i(R)$ represents an $i$-dimensional hole in $R$. The first Betti number, $\beta_1$, is the number of one-dimensional holes in $R$, while the second Betti number, $\beta_2$, is the number of two-dimensional holes, or voids, in $R$. One can think of the two-dimensional holes as air bubbles trapped by $R$. That is, each air bubble region has only one neighbor, which is $R$ itself. Note that each void in $R$ is itself a region of uniform linearity/planarity. Furthermore, if $R$ is a linear region, then each void trapped by $R$ must be a planar region and vice versa. Therefore, $\beta_2$ highlights the adjacency interaction between linear regions and planar regions.

The larger the Betti numbers, the more complicated the geometry is for the region as well as more interactions with other regions. Figure~\ref{fig:graph} shows an example in which the domain is a solid torus and there are three regions, two of which are planar (orange) and one linear (green). The innermost region is planar, which also has the shape of a solid torus. Such a shape has no two-dimensional holes, \ie $\beta_2=0$, and a single one-dimensional hole (the meridian of the torus), \ie $\beta_1=1$. The outermost region (Figure~\ref{fig:graph}(a)) is a thin layer of planar region $\beta_1=4$. The linear region (Figure~\ref{fig:graph}(b)) is bounded from inside by the innermost planar region and from outside by the boundary of the domain and the thin planar region. It has one bubble inside (the innermost planar region) and two one-dimensional holes (one due to the smaller torus and one due to the hole in the domain itself). Thus, $\beta_2=1$ and $\beta_1=2$.

\subsection{Indices and Network of Degenerate Curves}
\label{sec:index}

A degenerate curve can be an open curve, \ie touches the boundary of the domain, or a closed loop. Moreover, degenerate curves do not live in isolation for 3D symmetric tensor fields. This is in sharp contrast to the 2D case, in which the set of degenerate points is isolated under structurally stable conditions. Each degenerate point in 2D tensor fields can be measured in terms of its {\em tensor index}~\cite{Zhang:07} defined as follows: when traveling along a loop enclosing the degenerate point in a counterclockwise fashion, the unit major eigenvector field along the loop also covers a circle (the Gauss circle) a number of times in which the number, a multiple of $\frac{1}{2}$, is the tensor index. The fundamental degenerate points include wedges (index $\frac{1}{2}$) and trisectors (index $-\frac{1}{2}$). Note that the sign refers to whether the unit eigenvector field travels along the Gauss circle counterclockwise (wedges) or clockwise (trisectors).

The set of degenerate points can be complicated for a 3D tensor field. For example, a degenerate curve can be a loop and even form a knot. Furthermore, two degenerate loops can be linked even when they belong to different regions. To better understand the relationships among the curves in the degenerate curve network, we define a topological characterization of degenerate curves, their indexes, in the following paragraphs.

Let $R$ be a topological disk without self-intersections such that there are no degenerate points on its boundary $\partial R$ (the circles in Figure~\ref{fig:thm_2} (a-b)). We consider the right-handed frames formed by the unit major eigenvector $v_1$, the medium eigenvector $v_2$, and the minor eigenvector $v_3$ of the tensor fields on $\partial R$. Note that at each point $p$ where the eigenvectors are well-defined, i.e., not a degenerate point, there are four ways of selecting a right-handed frame from the eigenvectors. Let $f_0(p)=(v_1, v_2, v_3)$ be one such frame. Then $f_1(p)=(v_1, -v_2, -v_3)$, $f_2(p)=(-v_1, v_2, -v_3)$, and $f_3(p)=(-v_1, -v_2, v_3)$ are the other choices of such frames (Figure~\ref{fig:thm_2} (left)). Let $r_m$ ($0\le m \le 3$) be the 3D rotation that maps the $X$-axis to the major eigenvector in $f_m(p)$, the $Y$-axis to the medium eigenvector, and the $Z$-axis to the minor eigenvector (Figure~\ref{fig:thm_2} (left)). Using the matrix representation, $f_m(p)$ can be expressed as a special orthogonal matrix $r_m$. Define $r_x$, $r_y$, and $r_z$ as the $180^\circ$ rotation around the $X$-, $Y$-, and $Z$-axis, respectively. Then we have

\begin{eqnarray}
  r_1 &=& r_0 r_x \\
  r_2 &=& r_0 r_y \\
  r_3 &=& r_0 r_z
\end{eqnarray}

We choose $p_0 \in \partial R$ and travel along $\partial R$ for one round in order to inspect the behavior of the {\em continuous} eigenframe that is initially set to be $f_0(p_0)$ (Figure~\ref{fig:thm_2} (a-b)). Since the tensor field is continuous over $R$ and there is no degenerate point on $\partial R$, we know that the eigenvector fields are also continuous over $\partial R$. Therefore, when returning to $p_0$ after a full boundary walk, the frame $f'(p_0)$ must be $f_m(p_0)$ for some $0\le m\le 3$. That is,

\begin{equation}
  r'(p_0)=r_0(p_0) c
\end{equation}

\begin{figure}[!t]
	\centering
	\begin{overpic}[width={\columnwidth}]{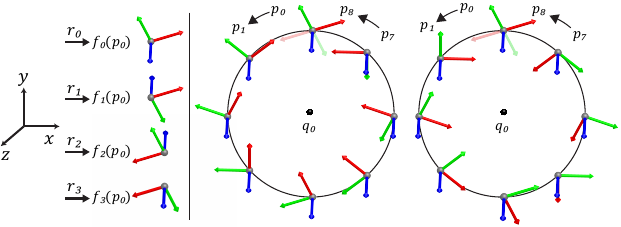}
		\put(120, -7) {\small(a)}
		\put(200, -7) {\small(b)}
	\end{overpic}
	\caption{Given a region containing a degenerate point ${\bf q}_0$ in its interior (the disk in (a) and (b)), the winding number of the boundary curve (the circle in (a-b)) is the total rotation of the frames formed by the major eigenvectors (red), medium eigenvectors (green), and minor eigenvectors (blue). It is $\mathbf{k}$ when ${\bf q}_0$ is planar wedge (a) and $\mathbf{-k}$ when ${\bf q}_0$ is planar trisector (b). Due to the continuity in the eigenvector fields away from ${\bf q}_0$, after travelling one round along the circle the eigenframe must be one of the four configurations (left).
	}
	\label{fig:thm_2}
\end{figure}

\noindent where $c=1$, $r_x$, $r_y$, or $r_z$. Note that $c=r_0(p_0)^{-1}r'(p_0)$. We can show that $c$ is a property of the loop $\partial R$ as it is independent of the choice of the initial frame at $p_0$ (Appendix: Lemma~\ref{lemma:free_initial_frame}), the choice of starting point $p_0 \in \partial R$ (Appendix: Lemma~\ref{lemma:free_starting_point}), and the direction of travel (Appendix: Lemma~\ref{lemma:free_orientation}). This indicates that the quantity $c$ is a well-defined characteristic of both the region $R$ and its boundary $\partial R$. Since $c$ is a 3D rotation which can be represented as a unit quaternion, we refer to its quaternion representation (which we still refer to as $c$) as the {\em winding number} of $R$ and $\partial R$.

We now consider a degenerate point ${\bf q}_0$ and simply-connected regions $R$ that contains it. It turns out that there is a sufficiently small neighborhood $R'$ inside which any simply-connected region containing ${\bf q_0}$ and without self-intersection has the same winding number (Appendix: Theorem~\ref{thm:winding_number_local}). This winding number is $\mathbf{i}$ if ${\bf q}_0$ is a linear wedge, $-\mathbf{i}$ if ${\bf q}_0$ is a linear trisector, $\mathbf{k}$ if ${\bf q}_0$ is planar wedge, and $\mathbf{-k}$ if ${\bf q}_0$ is planar trisector. We thus refer to  this winding number as the {\em index} of ${\bf q}_0$, which we denote by $\phi({\bf q}_0)$.

It is worth noting that in any arbitrarily small neighborhood of a transition point, it is possible to find two loops that have opposite winding numbers, e.g., $\mathbf{i}$ and $-\mathbf{i}$ for linear transition points.

Furthermore, we can show that the analysis can be made more global in the following sense. Consider a region $R$ that is free of self-intersection and contains only one degenerate point ${\bf q_0}$ in its interior. Then, if the normal to the surface $R$ is nowhere perpendicular to the dominant eigenvector field (major eigenvector in linear-dominant region and minor eigenvector in planar-dominant region), then the winding number of the boundary $\partial R$ is the same as the index of the degenerate point. (Appendix: Corollary~\ref{coro:global}).

We now consider a linear degenerate loop $\gamma$, over which the minor eigenvectors are not defined. Consequently, the notion of winding number does not apply to $\gamma$. However, $\gamma$ has a sufficiently small  neighborhood $K$ that is homotopically equivalent to $\gamma$. Consider two simple non-contracting loops $\eta_1, \eta_2  \subset K$ and a ring $\psi \subset K$ bounded by $\eta_1$ and $\eta_2$.
\begin{wrapfigure}{r}{0.2\columnwidth}
	\vspace{-2em}
	\begin{center}
		\includegraphics[width=0.2\columnwidth]{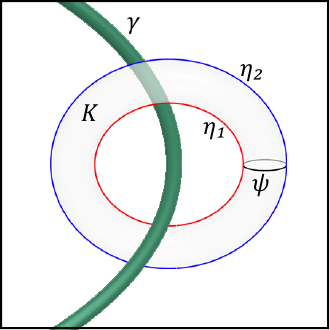}
	\end{center}
	\vspace{-2em}
\end{wrapfigure}
Assume that $\psi$ does not intersect the degenerate loop $\gamma$. Then by using techniques similar to the one in the proof for Corollary~\ref{coro:global}, we can show that the winding number of the region $\psi=1$, i.e., degenerate point-free. This implies that the winding number of $\eta_1$ is equal to that of the $\eta_2$. Since the choice of $\eta_1$ and $\eta_2$ is arbitrary, we can choose them to be arbitrarily close $\gamma$. Therefore, we can define the winding number for $\gamma$ to be that of its neighboring loops.

Note that the winding number of a degenerate loop $\gamma$ is not the index of $\gamma$. Rather, it is the index of a degenerate loop linked to $\gamma$. Suppose that $\gamma$ is the boundary of a topological disk without self-intersection. When the winding number of $\gamma$ is $\pm \mathbf{k}$, there must be a planar degenerate curve $\rho$ that intersects the topological disk. If $\rho$ is also a degenerate loop, then $\gamma$ and $\rho$ form a link. Similarly, when the winding number of $\gamma$ is $\pm \mathbf{i}$, there must be a linear degenerate curve $\rho$. When $\rho$ is also a loop, then $\gamma$ and $\rho$ also form a link. Note that $\gamma$ and $\rho$ may not be part of the same region even when they are of the same linearity/planarity type. In this case, their container regions have a relationship that is different from the adjacency relationship, even when the two regions are not adjacent.

\subsection{Visual Design of Topological Graph}
\label{sec:graph_rationale}

In this section, we provide the rationale behind the visual design of our graph, which aims to minimize edge crossing and accentuate the nodes and their properties.

In our topological graph, each region is represented by a node with the shape of a square. A linear region is colored cyan and a planar region is colored yellow. A pair of adjacent regions have their corresponding nodes connected with an edge in the graph. Furthermore, if a region is inside another region, we add a rectangular glyph on the edge to highlight the containment relationship. Since no regions of the same type can be adjacent to each other, we place all linear regions (cyan squares) on one row and all planar regions (yellow squares) on the row above. This allows the user to easily locate a region if its linearity/planarity is known. We show the Betti numbers $\beta_1$ and $\beta_2$ of a region by writing their values inside the square in the graph that corresponds to the region. To differentiate between the two values, we show $\beta_1$ next to an ellipse and $\beta_2$ next to an ellipsoid. Since $\beta_2$ records the number of air bubbles (the other type of regions) inside the region, it is always written so that it is closer to the row for the other types of regions. That is, $\beta_2$ is written in the top row of the square for linear regions (cyan) and the bottom row for planar regions (yellow). When $\beta_1=\beta_2=0$, they are not written to make it easier to identify such regions, which are contractible. Lastly, in our graph, we sort the nodes by the Betti numbers and the volume of their corresponding regions with an ascending order for the linear regions and a descending order for the planar regions, aiming to reduce the number of crossing points between edges in the graph.

Each degenerate curve is contained entirely inside a region of the same linearity/planarity. In the graph (Figure~\ref{fig:graph}), every degenerate curve has its node in the graph connected by an edge to the node that represents its container region.  To reduce the number of unnecessary crossings among this type of edges in the topological graph, all linear degenerate curves are placed on the row below the linear regions, and all planar degenerate curves are placed on the row above the planar regions. Degenerate curves belonging to the same region are displayed as a group of nodes. We sort the degenerate curves in the same region by their Writhe numbers, the total linking numbers, and lengths. We use closed rings and half rings to indicate degnerate loops and open curves, respectively. On the rings, we color the linear wedge on the curve in green, the linear trisector in blue, the planar wedge in yellow, and the planar trisector in red. For knotted loops, the Writhe number is enclosed inside the ring. Furthermore, when the Jones polynomial of a degenerate loop is not a constant, we regard the degenerate loop as a knot and add $\ast$ to the corresponding node. Linked degenerate curves have an edge connecting their nodes in the graph, with the linking number written next to the edge.

\section{Graph Construction}
\label{sec:graph_construction}

In this section, we describe our pipeline to construct the topological graph given a 3D tensor field. The input to our pipeline is a 3D piecewise linear tensor field defined on a tetrahedral mesh representing some physical domain. The tensor values are given at the vertices in the mesh and are linearly interpolated per tetrahedron. Such an interpolation ensures the continuity of the tensor field and thus its topological features across the faces of the mesh.

Our pipeline consists of the following stages (Figure~\ref{fig:graph_construction}):
\begin{inparaenum}[(i)]
	\item degenerate curve and neutral surface extraction,
    \item region extraction and processing, and
	\item degenerate curve processing.
\end{inparaenum}
We describe each of the stages in detail next.

{\bf Degenerate Curve and Neutral Surface Extraction:} We first seamlessly extract the degenerate curves in the field using the technique of Roy et al.~\cite{Roy:19}, which computes the set of degenerate points inside each tetrahedron using a parameterization that maps all degenerate points in a linear tensor field to an ellipse. Degenerate curves from adjacent tetrahedra are then stitched together across their common boundary faces. In addition, this technique classifies the linearity/planarity of each degenerate curve. Additionally, we compute the wedge/trisector classification for each sample point in the polyline representing the degenerate curve. We also mark for each degenerate curve whether it is an open curve or a loop.

In parallel, we extract the neutral surfaces in the tensor field using the technique of Qu et al.~\cite{Qu:21}, which extracts the neutral surface from each tet based on a parameterization that maps all neutral points in a linear tensor field to the Projective plane with a handle attached. The pieces from individual tets are then stitched together across the tets' boundaries.

\begin{figure}[!t]
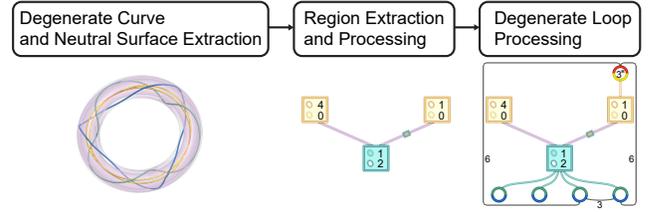

	\centering
	\begin{overpic}[width={0.95\columnwidth}]{images/graph_Construction}
	\end{overpic}
	\caption{
		The pipeline of our topological graph construction algorithm.
	}
	\label{fig:graph_construction}
\end{figure}

{\bf Region Extraction and Processing:} As the regions are bounded by neutral surfaces, we create linear regions and planar regions by dividing each tetrahedron into sub-regions and connecting adjacent sub-regions in neighboring tets with the same linearity or planarity. Figure~\ref{fig:region_extraction} illustrates the process for planar regions. For each tet, using the neutral surfaces inside the tet and the intersection curves on the tet faces (Figure~\ref{fig:region_extraction} (b)), we determine the boundary surfaces of the sub-regions by segmenting the tet faces with the intersection curves into patches (Figure~\ref{fig:region_extraction} (c)). We identify the linearity/planarity of each patch based on its vertex tensor value or adjacent patches. Note the adjacent patches must have opposite linearity/planarity classifications. We stitch the neutral surfaces and the patches across their shared edges to create the boundary surface of planar sub-regions (Figure~\ref{fig:region_extraction} (d)). Inside the boundary surface, the sub-region has uniform linearity/planarity. Finally, we trace the planar region by finding the connected sub-regions over the mesh (Figure~\ref{fig:region_extraction} (e)). Linear regions are extracted in a similar fashion. A node is created for each region in the graph and visualized as a colored square.

Next, we go through each triangle in the neutral surface and identify the pair of linear and planar regions on both sides. If two regions share an open sheet of the neutral surface, they are adjacent to each other. On the other hand, if two regions share a closed sheet of the neutral surface, one region is inside the other region. An edge is created for each pair of adjacent regions, while the containment relationship is further highlighted with a rectangle on the edge in the graph.

For each region, we compute its volume using the technique in~\cite{crane:2018:discrete}. To compute the Betti numbers of the region, recall that it is a $3$-manifold in the $XYZ$ space bounded by the neutral surfaces. Thus, $\beta_0$ of a given region $R$ is one and $\beta_i$ is zero for $i > 2$. To compute the second Betti number $\beta_2$ of the given region $R$, we identify all the regions that are adjacent to $R$ and share a closed border with $R$. Such a region is a void trapped by $R$ and contributes one generator for the second homological group $H_2$ of $R$. Thus, $\beta_2$ of $R$ is the number of such regions. According to~\cite{rotman:2013:introduction}, $\beta_1(R) = \beta_0(R) + \beta_2(R) - \chi(R)$ where $\chi(R)$ is the Euler characteristic of $R$. Note that $\beta_0(R) = 1$ in our cases. Since a region can be represented as a $3$-simplicial complex, the Euler characteristic of a region is defined as $\chi(R)=V-E+F-T$ where $V$ is the number of vertices, $E$ the number of edges, $F$ the number of faces, and $T$ the number of tets in $R$. However, tetrahedralizing a region can be time-consuming. We propose an effective evaluation of the $\chi(R)$ using the result from~\cite{tom:2008:algebraic}, which states that $\chi(M)$, the Euler characteristic of a compact $(n+1)$-manifold $M$, is related to $\chi(\partial M)$, the Euler characteristic of its boundary $\partial M$, by $\chi(\partial M)=(1+(-1)^n)\chi(M)$. Since our regions are compact $3$-manifolds, we have $2\chi(R)=\chi(\partial R)$ where $\partial R$ is the boundary surface of the given region. Since $\partial R$ consists of disjoint sheets of neutral surfaces $\{S_i\}_{i=1}^n$, we have

\begin{equation}
	\chi(R)= \frac{1}{2} \chi(\partial R) = \frac{1}{2}\sum_{i=1}^n \chi (S_i).
\end{equation}

Recall that the Euler characteristic of the surface $S_i$ is defined as $\chi(S_i)=V_i-E_i+F_i$. Here, $V_i$ is the number of vertices of $S_i$, $E_i$ is the number of edges, and $F_i$ is the number of faces. This allows us to compute $\chi(R)$ without tetrahedralizing $R$.

Once $\beta_1$ and $\beta_2$ are computed, we visualize them for each region using the scheme described in \sref{sec:topological_graphs}. Furthermore, we sort the nodes of the planar/linear regions in ascending/descending order with their $\beta_1$'s and volumes.

\begin{figure}[!t]
	\centering
	\begin{overpic}[width={\columnwidth}]{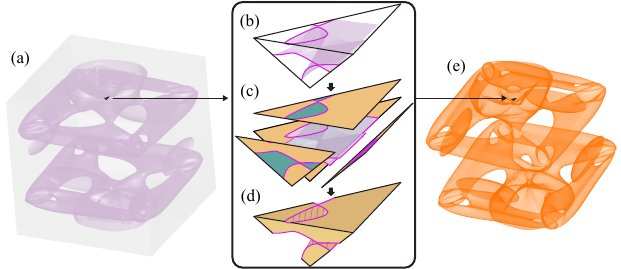}
	\end{overpic}
	\caption{This figure illustrates the extraction of planar regions. Given the neutral surfaces of the tensor field (a), we extract the neutral curves inside each face of the mesh (b) which we use to segment a face into patches of uniform linearity/planarity ((c): linear patches in green and planar patches in orange). The neutral surface and the patches in the mesh faces are then stitched together across shared edges (d), which form the boundary of planar regions inside each tet. These regions are then stitched together to form the regions (e) which are the nodes in our topological graph.
        }
	\label{fig:region_extraction}
\end{figure}

{\bf Degenerate Curve Processing:} To identify the container region of a degenerate curve, we first find an intersection point $p_0$ of the curve with a tet face of the mesh. Note that the face may consist of more than one patch, and the point can be inside exactly one patch. Thus, we test whether the point is inside one of the patches in the face by computing the winding number of the displacement vector field with respect to the point. That is, for each point $q$ on the boundary of the patch, we compute the vector $q-p_0$. When travelling along the patch boundary once, the directional component of $q-p_0$ must travel along the Gauss circle an integer number of times. The number is one if and only if the point $p_0$ is inside the patch. Once we have identified the patch containing $p_0$, the region that contains the patch is then marked as the container region for the degenerate curve.

We now identify knots and links among degenerate curves. The {\em Gauss linking integral} is a useful measure of the extent to which a pair of curves are linked to each other. A {\em link diagram} is a projection of a curve network onto a 2D plane. For two closed curves, the Gauss linking integral is an integer-valued {\em link invariant} indicating the linking number of the link diagram formed by the two curves. For open curves, it is a real number representing the average of half the number of crossings over the link diagrams of all possible projection directions (\ie the normal vectors of 2D planes)~\cite{panagiotou:2020:knot}. Given two degenerate curves $\gamma$ and $\rho$ represented as polylines, i.e., $\gamma=\{a_i\}_{i=1}^n$ and $\rho=\{b_j\}_{j=1}^m$, the Gauss linking integral~\cite{banchoff:1976:self} is represented as

\begin{equation}
	\mathcal{L}(\gamma, \rho) = \frac{1}{4\pi} \sum_{i = 1}^n\sum_{j = 1}^m Q(a_i, b_j).
\end{equation}

\noindent where $Q(a_i, b_i)$ is the area of the quadrangle formed by the end points of $a_i$ and $b_j$. For closed curves, if their Gauss linking integral is greater than or equal to $1$, then the curves are linked and we create an edge between their corresponding nodes in the graph. Additionally, we find that while the Gauss linking integral of two open curves is greater than $0.9$, they are entangled with each other. Thus, we also add an edge for the nodes of entangled open curves.

To evaluate the knottiness of a degenerate loop $\gamma=\{a_i\}_{i=1}^n$, we compute its Writhe number~\cite{panagiotou:2020:knot} as

\begin{equation}
	W(\gamma) = \frac{1}{2\pi}\sum_{i = 2}^n\sum_{j < i} Q(a_i, a_j).
\end{equation}

While the Writhe number measures the average of the number of crossings in all possible projections of the curve, a twisted loop such as connecting the ends of a helix can have a high value of the Writhe number without actually being knotted. Hence, we compute the Jones polynomial~\cite{livingston:1993:knot} of degenerate loops to identify knots. In fact, the Jones polynomial is defined for a collection of finite loops and can thus be used to also test whether multiple curves are linked besides testing whether one loop is knotted. In this paper, we apply Jones polynomials only to knot detection and classification.

Given a set of loops $\Gamma$, it is often projected onto a plane with self-overlaps (crossing points). Such a projection leads to a {\em link diagram} (Figure~\ref{fig:jones}: blue curves in the equations). The Jones polynomial is a {\em link invariant}, i.e., regardless of the projection plane and thus the link diagram. The Jones polynomial is defined in a recursive fashion, involving a base case and three recursive simplification rules (Figure~\ref{fig:jones}). This simplification allows us to systematically reduce the number of crossing points with either reconnection (Figure~\ref{fig:jones} (a-2.1 and a-2.2)) or removing an unknot which has no crossing with the rest of the curve network (Figure~\ref{fig:jones} (a-3)). Unfortunately, as can be seen the second simplification rule, the Jones polynomial can be the sum of $O(2^n)$ terms where $n$ is the number of crossing points in the link diagram. In fact, computing the Jones polynomial of a curve network $\Gamma$ is an NP-hard problem due to the recursive nature of its definition~\cite{livingston:1993:knot}.

\begin{figure}[!t]
	\centering
	\begin{overpic}[width={0.9\columnwidth}]{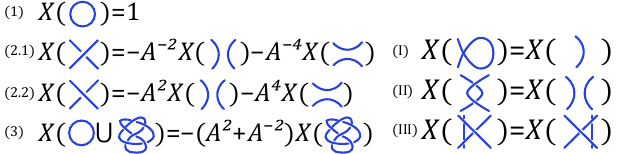}
			\put(50, -8)  {\small (a)}
			\put(180, -8)  {\small (b)}
		\end{overpic}
	\caption{The rules in the recursive definition of the Jones polynomials (a), which are invariant under the three types of Reidemeister moves (b). }
	\label{fig:jones}
\end{figure}

Given this fundamental challenge, we employ the technique of Livingston~\cite{livingston:1993:knot}, which further simplifies the computation process by simplifying the link diagram using the {\em Reidemeister moves} (Figure~\ref{fig:jones} (b-I, b-II, b-III)). When combined, these moves can convert a link diagram of a curve network to any other link diagram for the same curve network. Moreover, the Jones polynomials are maintained under these moves. Livingston~\cite{livingston:1993:knot} takes advantage of this and first simplifies the link diagram by using types I and II Reidemeister moves. Note that only these two types of moves reduce the number of crossing points in the link diagram. When no more Reidemeister moves are available to further simplify the link diagram, this technique resorts back to the simplification rules (Figure~\ref{fig:jones} (a-2.1), (a-2.2), and (a-3)). We refer our reader to Appendix~\ref{sec:jones} for more detail of this technique.

{\bf Interaction:} Given a complex dataset, the graph may contain tens of regions and degenerate curves, making it difficult for domain scientists to correlate a topological feature with its corresponding node in the graph. To address this, an interface is available for users to explore regions and degenerate curves in the field. By selecting the nodes and edges in the graph, the user can see the corresponding degenerate curve and region, two degenerate curves that are linked, or the common boundary between a pair of adjacent regions.
\section{Performance}

Our feature extraction algorithm is tested on a number of analytical and simulation data from solid mechanics. The number of tetrahedra in our data ranges from $780300$ to $1953720$. Measurements were taken on a computer with an Intel(R) Xeon(R) E3-2124G CPU$@$ 3.40 GHz, 16GB of RAM, and an NVIDIA Quadro P620 GPU. The time to extract neutral surfaces and degenerate curves averages to $10.37$ seconds and $0.47$ seconds, respectively. Region extraction and processing takes $2.3$ seconds on average. Lastly, the average time to compute the Writhe number and the Jones polynomial of degenerate loops and the linking numbers of the pairs of the degenerate curves is $1.9$ seconds on average.

\section{Applications}
\label{sec:apps}

Our topological graph allows the features in a tensor field to be visualized holistically. The user can interactively inspect a single feature such as a degenerate curve and a region as well as the relationships between two features. In addition, our topological graph allows two datasets to be compared using their topological features instead of pointwise comparisons. The datasets to be compared can be from a simulation with different boundary conditions or materials. These scenarios have applications in solid mechanics and material science.

{\bf Solid Mechanics:} Given a mechanical design, it is important to test the durability of the design under various stress conditions and detect potential fractures under extreme stress. For example, O-rings are used as a sealing solution by multiple industries~\cite{hannifin:2007:parker}. They are squeezed among distinct components of machines to prevent the occurrence of fluid or gas leaks. Therefore, it is essential to numerically simulate its behaviors under different compression. There are isotropic compression (magnitude is constant) and anisotropic compression (magnitude varies) under static and dynamic sealing. We consider the cases when the anisotropy in the magnitude of the compression force is periodic both along its circumference and in the cross-section:

\begin{equation}
	u(\theta, \phi) = (1 - \alpha) + \alpha \cos(p\theta + q\phi),
	\label{eq:traction_force}
\end{equation}

\noindent where $\theta$ and $\phi$ are respectively the longitude and meridian of the surface of the O-ring (a torus), $u(\theta, \phi)$ is the magnitude of the compression force at the point $(\theta, \phi)$ on the torus, $p$ and $q$ are respectively the periodicity of $u$ in terms of its longitude and meridian, and $\alpha$ amplifies the magnitude of anisotropy. The unit of $u$ is Newton (N). Figure~\ref{fig:oring_deformation} shows the influence of the parameters on the compression force.

The O-ring has an internal diameter of $6.07 mm$ and a width of the cross section of $1.78 mm$. When $\alpha=0$, i.e., the purely isotropic case, there is one linear degenerate loop in the stress tensor field (Figure~\ref{fig:oring_3}(a)). We consider the case when $p=3$ and $q=2$ with increasing $\alpha$'s. Figure~\ref{fig:oring_3}(b) shows the case when $\alpha=0.25$. The three-fold symmetry in the deformation of the degenerate loops and the linking number of the two loops ($3$) are due to the periodicity in the compression force when $p=3$. Despite the compression force everywhere on the boundary of the O-ring, there is now a thin ring of the planar region in the middle of the torus with the Betti numbers $\beta_1=1$ and $\beta_2=0$. Accordingly, the linear region now has a void (\ie $\beta_2=1$) due to the planar region trapped inside. As $\alpha$ increases, so does the anisotropy in the magnitude of the compression force on the boundary and the size of the planar region around the core of the torus. Figure~\ref{fig:oring_3}(c) shows the case when $\alpha=0.35$, which is also the case shown in Figure~\ref{fig:teaser}. Notice that in this case, the planar region has grown large enough to even host one knotted degenerate curve inside that loops twice, indicating the core of the O-ring is going through more compression. Notice that when $\alpha=0$ (Figure~\ref{fig:oring_3} (a)), the core of the O-ring is going through extension. The change in the stress tensor at the core is due to the stronger anisotropy in the loading condition at the boundary of the O-ring. Also, when $\alpha=0.35$, we observe two additional short degenerate loops (Figure~\ref{fig:teaser}: the two curves labelled with ``zoom in''). Given that these loops are short and do not respect the three-fold symmetry in the boundary load, we hypothesize that they are topological noise due to numerical issues. Further investigation is needed to address this.

\begin{figure}[!t]
	\centering
	\begin{overpic}[width={\columnwidth}]{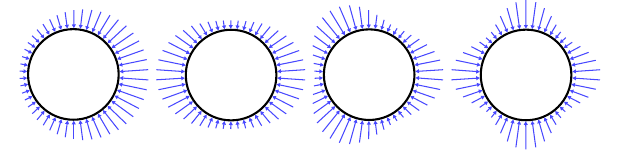}
		\put(25, -8)  {\small $q=1$}
		\put(85, -8)  {\small $q=2$}
		\put(145, -8)  {\small $q=3$}
		\put(205, -8)  {\small $q=4$}
	\end{overpic}
	\caption{We demonstrate the compression force given in Equation~\ref{eq:traction_force} at the cross-section of $\theta = 0$.
	}
	\label{fig:oring_deformation}
\end{figure}

\begin{figure}[!t]
	\centering
	\begin{overpic}[width={\columnwidth}]{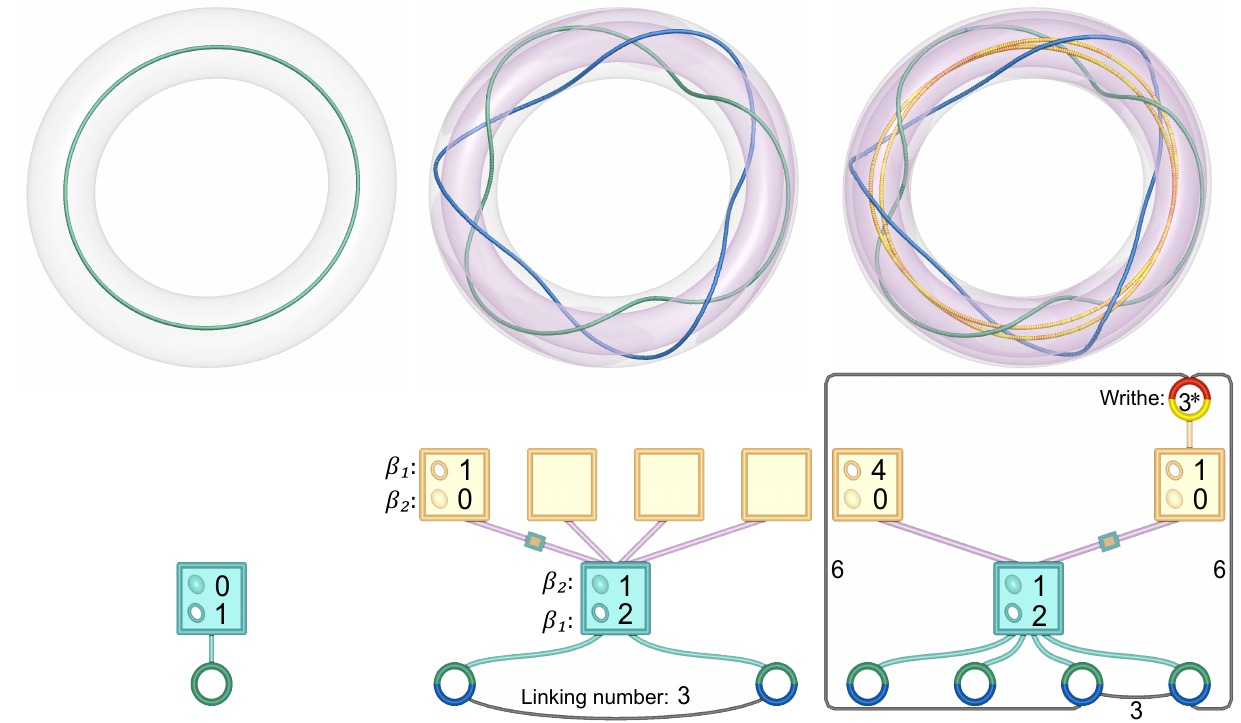}
		\put(  26, -7)  {\small(a) $\alpha = 0$}
		\put(110, -7)  {\small(b) $\alpha = 0.25$}
		\put(190, -7)  {\small(c) $\alpha = 0.35$}
	\end{overpic}
	\caption{Three O-ring scenarios with varying $\alpha$ values under the condition $p=3$ and $q=2$.
	}
	\label{fig:oring_3}
\end{figure}

\begin{figure*}[!t]
	\centering
	\begin{overpic}[width={\textwidth}]{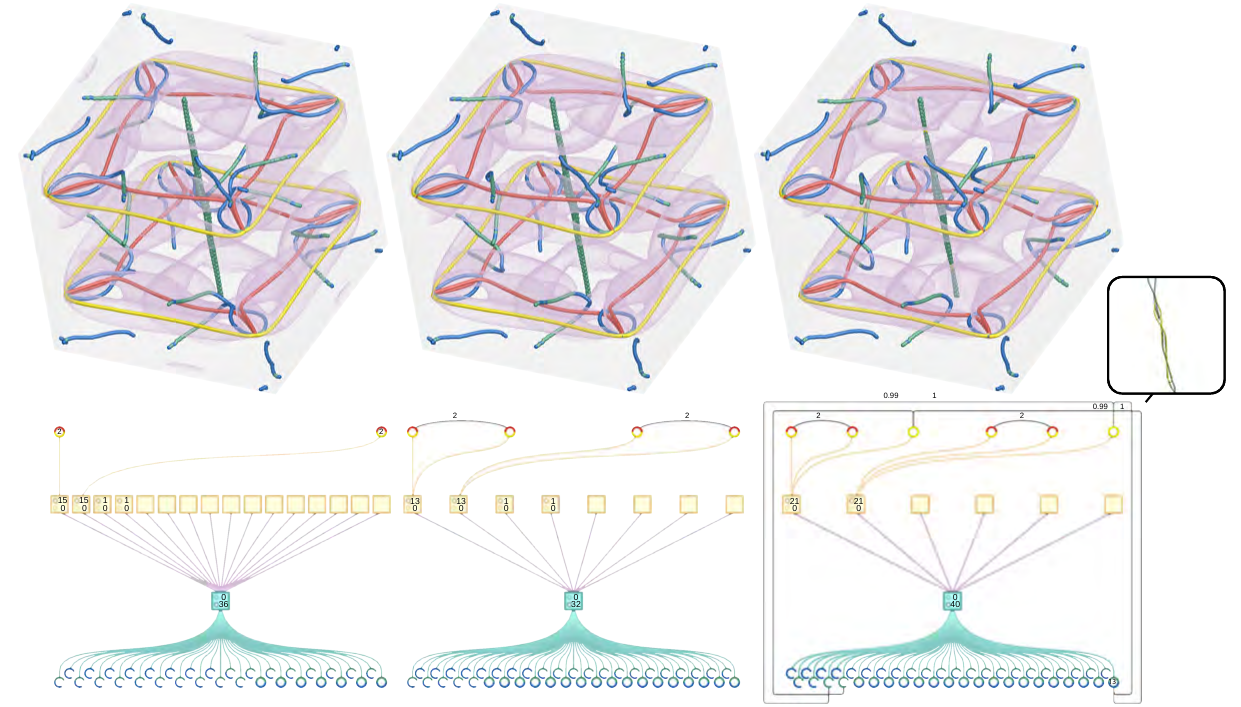}
		\put( 40, -7)  {(a) Poisson's ratio $0.13$}
		\put(200, -7)  {(b) Poisson's ratio $0.18$}
		\put(360, -7)  {(c) Poisson's ratio $0.24$}
	\end{overpic}
	\caption{Three concrete materials with different Poisson's ratios are compared in terms of the topological graphs of their stress tensor fields. We notice that in cases (a) and (b), there are four non-trivial planar regions (yellow nodes) with positive $\beta_1$ values. When the Poisson's ratio of the concrete increases to $0.24$ (c), the four planar regions become two regions near the top and the bottom of the domain and linking between degenerate curves appears (inside the zoom-in box).
	}
	\label{fig:pile1_1}
\end{figure*}

In Appendix~\ref{sec:oring_more}, we provide additional analysis of varying $p$ and $q$ values systematically while keeping $\alpha$ constant.

{\bf Material Science:} The material properties of an object have a direct impact on its response to external stress. Concrete is a material that is widely used in the construction of buildings and bridges, and its durability has a direct impact on our daily life. The Poisson's ratio is a measure of the deformation of a material in the direction perpendicular to loading~\cite{dudzinski1952young} and thus can also serve as a measure for the incompressibility of the material. For a given concrete material, due to its composition and use, the Poisson's ratio can take on a range. In this application, we consider concrete materials with the Poisson's ratios of $0.13$, $0.18$, and $0.24$, respectively. Here, we use one cubic block to represent a typical volume of a pile cap in the foundation of a building.

\begin{wrapfigure}{r}{0.3\columnwidth}
	\vspace{-2em}
	\begin{center}
		\includegraphics[width=0.3\columnwidth]{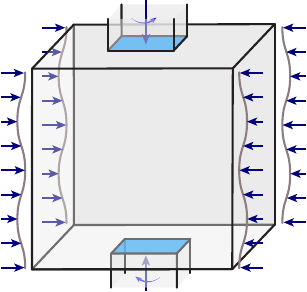}
	\end{center}
	\vspace{-2em}
\end{wrapfigure}

There is compression loading on the top, the bottom, and the sides of the block. From the sides, we impose a sinusoidal profile for the magnitude of the compression loading to represent possible interaction with neighbor blocks as illustrated in the right image. The tensor fields and their topological graphs are shown in Figure~\ref{fig:pile1_1}: (a) a Poisson's ratio of $0.13$, (b) a Poisson's ratio of $0.18$, and (c) a Poisson's ratio of $0.24$. While we expect different behaviors in the stress tensor fields within the block, we make the following observations based on the topological graphs.

There is a linear region at the center of the block that persists for all scenarios. We observe that in this linear region, there are degenerate curves connecting the top and bottom faces near the center of the domain. This is where the forces from the four sides of the block reach a balance.

On the sides of the cube, the magnitude of the load varies, which has three maximums and two minimums along a vertical line. The top and middle maximums, when coupled with the minimum inbetween them, lead to a trisector type of degenerate curve (red) near the top face of the cube. Notice the four-way symmetry in this degenerate curve. On the other hand, despite the maximal loading on the side of the cube near the top, the fact that the top surface is also being pushed down leads to a dead-end for material, namely, the wedge curve (yellow) near the top surface. Such an observation would not be possible when the wedge and trisector curves are not differentiated.

Furthermore, there are planar regions that are near the top and bottom faces of the box. The regions have a hole near the loading point at the top and bottom faces. In our graph, we show the Betti numbers of the regions, which indicate the regions' complexities. When the Poisson's ratio is $0.13$, there are many smaller planar regions scattered in the domain. When the Poisson's ratio increases to $0.18$, the material becomes less compressible. This is confirmed by the merging of the small planar regions in the topological graph. When the Poisson's ratio reaches $0.24$, the material becomes even more incompressible, which is highlighted by the merging of the planar regions, which now enclose the loading points at the top and bottom faces. In contrast, for all three materials, there is a single linear region, highlighting the fact that the side forces provide a stronger load than the loads at the top and bottom faces of the block.

We provide more examples in the supplementary video. All application examples in the paper and the video were generated using the commercial software SIMULIA~\cite{ABAQUS}.

\section{Conclusion and Future Work}
\label{sec:conclusion}

In this paper, we introduce the notion of topological graphs as a map to the global topology of 3D symmetric tensor fields. At the core of our approach are several observations and theoretical analyses of tensor field topology. In particular, we define the index for each degenerate loop and observe the existence of loops, knots, and links in the degenerate curves. We also provide algorithms to extract regions of uniform linearity/planarity and advocate the use of homology as a measure of regions' complexities.  The Writhe numbers, linking numbers, and the Jones polynomial for knots and links are also computed using existing algorithms, which can provide insight into the behavior of the underlying tensor field. Our topological graph provides a more holistic view of the topological structures in a tensor field and can be used to compare tensor fields from simulations of either different boundary conditions or different material properties. Our approach also enables the user to select individual objects (degenerate curves and regions of uniform linearity/planarity) for inspection even when these objects are occluded by other objects in the field.

In addition, we differentiate between wedges and trisectors and provide some interpretation of their physical meanings in the context of the stress tensor in engineering applications. We demonstrate how to use the topological graph for a global description of the tensor fields from solid mechanics and material science.

Our graph construction can be improved in a number of regards. For example, while we have found the use of bounding boxes for the containment relationship between two regions is sufficient for our datasets, the test can theoretically fail even when one of the regions is contained inside the other. We plan to investigate more accurate numeric methods to address this.

We consider our research one of the first steps towards a complete global topological analysis of 3D symmetric tensor fields. As such, there are many potential fruitful future research directions. For example, we wish to identify all potential bifurcations involving topological features in a 3D tensor field, thus allowing the processing of time-dependent tensor fields. Second, not all features are of equal importance, and we plan to study a multi-scale representation of tensor field topology similar to the one for 2D asymmetric tensor fields~\cite{Khan:20}. Highlighting where the graph is changing when a parameter (such as the Poisson's ratio) is varied can help improve the workflow for domain scientists, a direction we wish to pursue. In addition, we plan to explore better layouts for our topological graphs. The fact that regions of the same type cannot be connected has the potential of enabling better graph layout and thus enhanced graph readability. Including more statistics of graphs such as their Laplacian can provide additional insight into the global topology of tensor fields, and we plan to investigate this. Finally, we plan to extend our global topology approach to 3D asymmetric tensor fields, which have more types of topological features~\cite{Hung:22} and thus more complicated interactions among them.


%
%
%
%
%
\acknowledgments{The authors appreciate the constructive feedback from our anonymous reviewers. We wish to thank Peter Oliver for his help during video production. This work was supported in part by the NSF award (\# 1619383).}

\bibliographystyle{abbrv-doi-hyperref}

\bibliography{3DSymmetric_Graph}

\clearpage
\appendix

\section{Degenerate Point Index}

In this section, we provide detail of our theoretical analysis on the index of degenerate points in a 3D symmetric tensor field (Section~\ref{sec:index}).

Let $R$ be a topological disk without self-intersections such that there are no degenerate points on its boundary, $\partial R$ (the circles in Figure~\ref{fig:thm_2} (a-b)). We consider the right-handed frames formed by the unit major eigenvector $v_1$, the medium eigenvector $v_2$, and the minor eigenvector $v_3$ of the tensor fields on $\partial R$. There are four ways of selecting a right-handed frame from the eigenvectors. Let $f_0(p)=(v_1, v_2, v_3)$ be one such frame. Then $f_1(p)=(v_1, -v_2, -v_3)$, $f_2(p)=(-v_1, v_2, -v_3)$, and $f_3(p)=(-v_1, -v_2, v_3)$ are the other choices of such frames (Figure~\ref{fig:thm_2} (left)). Let $r_m$ ($0\le m \le 3$) be the 3D rotation that maps the $X$-axis to the major eigenvector in $f_m(p)$, the $Y$-axis to the medium eigenvector, and the $Z$-axis to the minor eigenvector (Figure~\ref{fig:thm_2} (left)). Define $r_x$, $r_y$, and $r_z$ as the $180^\circ$ rotation around the $X$-, $Y$-, and $Z$-axis, respectively. Then we have

\begin{eqnarray}
  r_1 &=& r_0 r_x \\
  r_2 &=& r_0 r_y \\
  r_3 &=& r_0 r_z
\end{eqnarray}

We choose $p_0 \in \partial R$ and travel along $\partial R$ for one round in order to inspect the behavior of the {\em continuous} eigenframe that is initially set to be $f_0(p_0)$ (Figure~\ref{fig:thm_2} (a-b)). Since the tensor field is continuous over $R$ and there is no degenerate point on $\partial R$, we know that the eigenvector fields are also continuous over $\partial R$. Therefore, when returning to $p_0$ after a full boundary walk, the frame $f'(p_0)$ must be $f_m(p_0)$ for some $0\le m\le 3$. That is,

\begin{equation}
  r'(p_0)=r_0(p_0) c
\end{equation}

\noindent where $c=1$, $r_x$, $r_y$, or $r_z$. Thus $c=r_0(p_0)^{-1}r'(p_0)$. Next, we show that $c$ is a curve invariant.

\begin{lemma}
Given the conditions above, $c$ is independent of the coordinate system for the space.
\label{lemma:free_coordinate_system}
\end{lemma}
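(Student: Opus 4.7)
The plan is to show that, under any orthogonal change of coordinates of the ambient space, the matrix $c = r_0^{-1} r'$ is literally unchanged, and therefore corresponds to the same element of $\{1, r_x, r_y, r_z\}$ in the new coordinate system.

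First I would determine how each ingredient in the definition of $c$ transforms under a coordinate change given by an orthogonal matrix $Q$. Since the tensor at $p$ transforms as $T_p \mapsto Q T_p Q^{-1}$, its unit eigenvectors transform covariantly, $v_i \mapsto Q v_i$, while the eigenvalues are preserved. Hence each of the four right-handed eigenframes transforms as $f_m(p) \mapsto Q f_m(p)$, and its matrix representation, whose columns are the signed eigenvectors, transforms by left multiplication: $r_m \mapsto Q r_m$. The continuously transported end-frame $r'$ transforms in the same way, because if $(v_1(t), v_2(t), v_3(t))$ is a continuous eigenframe along $\partial R$ starting from $f_0(p_0)$, then $(Q v_1(t), Q v_2(t), Q v_3(t))$ is a continuous eigenframe of the transformed tensor field starting from $Q f_0(p_0)$ with the identical sign-flip pattern at the endpoint; so $r' \mapsto Q r'$.

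Second, substituting into the definition of $c$ produces the one-line calculation
\begin{equation}
c \;=\; r_0^{-1} r' \;\mapsto\; (Q r_0)^{-1}(Q r') \;=\; r_0^{-1} Q^{-1} Q r' \;=\; r_0^{-1} r' \;=\; c.
\end{equation}
Thus $c$ is the very same $3\times 3$ matrix in both coordinate systems. Since in the original coordinates it lies in $\{1, r_x, r_y, r_z\}$, it lies in that same set in the new coordinates and equals the same element, which is precisely the content of the lemma.

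The main obstacle is conceptual rather than computational. One might instinctively expect any rotation of $\mathbb{R}^3$ to transform by conjugation $c \mapsto Q c Q^{-1}$, under which the discrete set $\{1, r_x, r_y, r_z\}$ is \emph{not} preserved by a generic $Q \in SO(3)$. The resolution, which I would emphasize in the writeup, is that $c = r_0^{-1} r'$ should not be read as a rotation acting on the ambient space but rather as the transition matrix from the frame $f_0$ to the frame $f'$ \emph{written in the internal basis of} $f_0$. Equivalently, $c$ is the diagonal matrix $\mathrm{diag}(\epsilon_1, \epsilon_2, \epsilon_3)$ with $\epsilon_i \in \{\pm 1\}$ recording the sign flips of the eigenvectors after one loop, a quantity that is manifestly intrinsic to the tensor field on $\partial R$. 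Because $r_0$ and $r'$ both acquire $Q$ on the \emph{left} under the coordinate change, the ambient coordinate dependence cancels exactly, and this cancellation is the heart of the lemma.
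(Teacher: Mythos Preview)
Your argument is correct, and it is in fact cleaner than the paper's own proof. The essential difference is the point of view on $r_m$.

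You treat $r_m$ directly as the matrix whose columns are the coordinate expressions of the eigenvectors. Under an orthogonal change of coordinates $Q$ every column picks up a left factor of $Q$, so $r_m \mapsto Q r_m$ and $r' \mapsto Q r'$, and the cancellation $c = r_0^{-1} r' \mapsto r_0^{-1} Q^{-1} Q r' = c$ is immediate. Your remark that $c$ is really $\mathrm{diag}(\epsilon_1,\epsilon_2,\epsilon_3)$, the sign-flip pattern expressed in the eigenframe basis, is exactly why this works and is a nice conceptual anchor.

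The paper instead regards $r_m$ as the abstract rotation of physical space taking the coordinate axes to the eigenframe. With that reading, a coordinate change acts by \emph{right} multiplication, $r_{0,S} = r_{0,T} P^{-1}$, so as abstract rotations one only gets $c_S = P\, c_T\, P^{-1}$, precisely the conjugation you flagged as the ``instinctive'' expectation. The paper then needs a second step, a change-of-basis computation $c_T|_S = P^{-1}(c_T|_T)P$, to conclude that the matrix of $c_S$ in $S$-coordinates equals the matrix of $c_T$ in $T$-coordinates. Your approach collapses these two steps into one by never leaving the matrix picture, which is both shorter and makes the invariance manifest. One small point: you should say $Q \in SO(3)$ rather than merely orthogonal, so that right-handedness of the eigenframes is preserved; the paper implicitly assumes this by restricting to right-handed coordinate systems.
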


\begin{proof}
Let $S$ and $T$ be two right-handed orthogonal coordinate systems (Figure~\ref{fig:lemma_1}). Let $P$ be the change-of-basis matrix from $T$ to $S$. Note that $P^{-1}$ can also be considered as the rotation that takes the coordinate system $S$ to the coordinate system $T$.

Therefore, $r_{0,S}(p_0)$, the rotation that takes $S$ to the eigenframe $f_0(p_0)$, is related to $r_{0,T}(p_0)$, the rotation that takes the $T$ to the eigenframe $f_0(p_0)$, as follows: $r_{0,S}(p_0) = r_{0, T}(p_0) P^{-1}$. Similarly, $r'_{S}(p_0) = r'_{T}(p_0) P^{-1}$.

We now consider $c_S=(r_{0,S}(p_0))^{-1} r'_{S}(p_0)$. This is equivalent to $c_S = (r_{0,T}(p_0) P^{-1})^{-1} r'_{T}(p_0) P^{-1} = P (r_{0, T}(p_0))^{-1} r'_{T}(p_0) P^{-1} = P c_T|_S P^{-1}$.

We consider the matrices corresponding to $c_S$ and $c_T$ under $S$. Thus, $c_S|_S = P|_S (c_T|_S) P^{-1}|_S$. Since $P|_S$ and $P|_{S}^{-1}$ commute, we have $P|_S=P|_T$ and $P^{-1}|_S = P^{-1}|_T$. Thus, we drop the subscripts and only use $P$ and $P^{-1}$. On the other hand, $c_T|_S = P^{-1}(c_T|_T) P$. Thus, $c_S|_S = P (c_T|_S) P^{-1} = P (P^{-1}(c_T|_T) P) P^{-1} = c_T|_T$.

Therefore, $c$ is independent of the choice of the coordinate system.
\end{proof}

\begin{lemma}
Given the conditions above, $c$ is independent of the initial frame chosen.
\label{lemma:free_initial_frame}
\end{lemma}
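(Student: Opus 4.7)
The plan is to show that the four possible initial frames at $p_0$ produce continuous extensions along $\partial R$ that differ from each other by the same pointwise right-multiplication by a fixed element of the group $G=\{1, r_x, r_y, r_z\}$, and then exploit the fact that $G$ is abelian.

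First I would set up the following observation. At each point $p \in \partial R$ (where the tensor is non-degenerate), the four right-handed eigenframes are exactly $f_0(p), f_0(p)r_x, f_0(p)r_y, f_0(p)r_z$, so they are related by right-multiplication by elements of $G$ that do not depend on $p$. Consequently, if $f_m(p_0) = f_0(p_0)\,r_\ast$ for some $r_\ast \in G$, I would compare the $f_0$-continuous extension $\tilde f_0(p)$ with the candidate curve $\tilde f_0(p)\,r_\ast$ along $\partial R$. The latter is continuous (right-multiplication by a fixed rotation is continuous), it is pointwise an eigenframe at $p$ (since right-multiplying any eigenframe by an element of $G$ gives an eigenframe), and it agrees with $f_m(p_0)$ at the start. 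By uniqueness of the continuous lift through a given initial frame (a consequence of the fact that the fiber of eigenframes over each non-degenerate tensor is the discrete four-element set $G$), this forces $\tilde f_m(p) = \tilde f_0(p)\,r_\ast$ for every $p$ on the loop.

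Next I would read off the holonomy. After one full traversal, $\tilde f_0$ returns as $f_0(p_0)\,c$ with $c \in G$, so
\begin{equation}
\tilde f_m(p_0+2\pi) \;=\; \tilde f_0(p_0+2\pi)\,r_\ast \;=\; f_0(p_0)\,c\,r_\ast \;=\; f_m(p_0)\,r_\ast^{-1} c\, r_\ast.
\end{equation}
Therefore the holonomy obtained from starting with $f_m$ is $c_m = r_\ast^{-1} c\, r_\ast$. Since $G$ is the Klein four-group, which is abelian and consists of involutions ($r_\ast^{-1} = r_\ast$), we get $c_m = r_\ast c r_\ast = r_\ast^2 c = c$, proving that the holonomy is the same for all four initial frames.

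The main obstacle is really the uniqueness-of-lift step, which is what makes the identification $\tilde f_m(p) = \tilde f_0(p)\,r_\ast$ legitimate along the whole curve rather than only at $p_0$; once that is secured, the abelian structure of $G$ finishes the argument trivially. Everything else is bookkeeping, and the argument does not require Lemma~\ref{lemma:free_coordinate_system} (which handled a different invariance), so it can be presented as a self-contained consequence of continuity and the Klein four-group structure of the sign ambiguity in orthonormal eigenframes.
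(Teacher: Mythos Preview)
Your argument is correct and follows essentially the same route as the paper: both compute the lift from the alternate initial frame as the original lift right-multiplied by a fixed $r_\ast\in\{1,r_x,r_y,r_z\}$, obtain the new holonomy as $r_\ast^{-1} c\, r_\ast$, and conclude by commutativity. Your version is in fact cleaner---you invoke the covering-space uniqueness of lifts and explicitly name the Klein four-group structure, whereas the paper discretizes with transition rotations $s_\ell$ and then writes the final step as ``$c^{-1} c c = c$'' with the symbol $c$ doing double duty for both $r_\ast$ and the winding number, leaving the commutativity appeal implicit.
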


\begin{proof}
Assume that we have chosen $f_m(p_0)$ as the initial frame where $m \ne 0$. Let $p_1$, ... $p_n$ be a sequence of points on $\partial R$ in the direction of the travel such that $p_n=p_0$ (Figure~\ref{fig:thm_2} (a-b)). Moreover, we choose the sample points so that at each sample point $p_\ell$, there is a unique $k_0$ that minimizes $d(f_0(p_\ell), f_k(p_{\ell+1}))$, the distance between the two frames (the angle of the minimal 3D rotation that takes the first frame to the second frame). For convenience, we reorder the four frames at $p_{\ell+1}$ such that $f_{k_0}(p_{\ell+1})$ becomes $f_0(p_{\ell+1})$. Under the new numberings, we define $s_{\ell+1}$ to be the unique rotation that takes the frame $f_0(p_{\ell})$ to $f_0(p_{\ell+1})$. That is, $s_\ell = r_0(p_{\ell+1}) (r_0(p_{\ell}))^{-1}$. Note that $s_\ell (f_m(p_{\ell}))=f_0(p_{\ell+1})$ for $1\le m \le 3$. Therefore,

\begin{equation}
r_m'(p_0)  = s_n s_{n-1} .. s_1  r_m(p_0) = s_n s_{n-1} .. s_1 r_0(p_0) c = r'(p_0) c
\end{equation}

Consequently, $c'=(r_m(p_0))^{-1}r_m'(p_0) = c^{-1}(r_0(p_0))^{-1} r'(p_0) c = c^{-1} c c = c$. Thus, $c$ does not depend on the choice of the initial frame.
\end{proof}

\begin{lemma}
Given the conditions above, $c$ is independent of the direction of travel along $\partial R$.
\label{lemma:free_orientation}
\end{lemma}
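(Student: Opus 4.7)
The plan is to reduce the claim to the elementary algebraic fact that every element of the set $\{1, r_x, r_y, r_z\}$ is its own inverse, via a standard path-reversal argument on the continuous frame transport around $\partial R$.

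Let $c$ denote the quantity obtained by traveling along $\partial R$ in one direction from $p_0$ with initial frame $f_0(p_0)$, and let $c'$ denote the analogue for the reverse direction with the same starting point and same initial frame. The first step is to establish $c' = c^{-1}$. Reusing the sample points $p_0, p_1, \ldots, p_n = p_0$ and the local rotations $s_\ell = r_0(p_{\ell+1})\, r_0(p_\ell)^{-1}$ from the proof of Lemma~\ref{lemma:free_initial_frame}, the forward transport gives $r'(p_0) = s_n s_{n-1} \cdots s_1\, r_0(p_0)$, so $c = r_0(p_0)^{-1} s_n \cdots s_1\, r_0(p_0)$. Reversing the direction visits the same sample points in the opposite order, and the infinitesimal rotations linking consecutive continuous frames become $s_1^{-1}, s_2^{-1}, \ldots, s_n^{-1}$. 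Applying these in sequence starting from $r_0(p_0)$ yields a terminal frame $s_1^{-1} s_2^{-1} \cdots s_n^{-1}\, r_0(p_0)$, so $c' = r_0(p_0)^{-1} s_1^{-1} \cdots s_n^{-1}\, r_0(p_0)$. A direct multiplication collapses the telescoping product and shows $c \cdot c' = 1$, hence $c' = c^{-1}$. Here Lemma~\ref{lemma:free_initial_frame} is invoked implicitly to guarantee that the value of $c'$ does not depend on which of the four right-handed frames at $p_0$ one takes as the initial frame for the reverse traversal.

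The second step is the algebraic observation that each of $1, r_x, r_y, r_z$ is a $180^\circ$ rotation (or the identity) and therefore satisfies $r^2 = 1$. Consequently $c^{-1} = c$, and combining with the first step gives $c' = c$, completing the argument.

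The main obstacle is conceptual rather than computational: one must be comfortable with the fact that reversing a path inverts the holonomy of parallel transport along it. Once this is accepted, the conclusion is immediate from the group-theoretic structure of $\{1, r_x, r_y, r_z\} \cong \mathbb{Z}/2 \times \mathbb{Z}/2$, every element of which is an involution. A minor bookkeeping care is needed to confirm that the ``nearest-frame'' rule used to define the $s_\ell$ is symmetric under reversal, so that the same decomposition $s_\ell$ governs both the forward and backward traversals; this is immediate since closeness of two frames is a symmetric relation.
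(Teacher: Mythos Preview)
Your argument is correct and follows essentially the same route as the paper: both show that reversing the traversal replaces the accumulated rotation product by its inverse, yielding $c' = c^{-1}$, and then use that each of $1, r_x, r_y, r_z$ is an involution to conclude $c' = c$. Your version is slightly more explicit about why the same $s_\ell$'s govern the backward walk and about the invocation of Lemma~\ref{lemma:free_initial_frame}, but the structure of the proof is identical.
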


\begin{proof}
Assume that we have chosen $f_0(p_0)$ as the initial frame. Let $p_1$, ... $p_n$ be the sequence of points on $\partial R$ from Lemma~\ref{lemma:free_initial_frame} (Figure~\ref{fig:thm_2} (a-b)). Furthermore, let $s_\ell = r_0(p_{\ell+1}) (r_0(p_{\ell}))^{-1}$ be the unique rotation that takes the frame $f_0(p_{\ell})$ to $f_0(p_{\ell+1})$. Then we have $s_\ell^{-1} (f_m(p_{\ell+1}))=f_0(p_{\ell})$ for $1\le \ell \le n$ and $0\le m \le 3$.

We now consider travelling in the opposite direction along $\partial R$, that is, $p_0$, $p_1'=p_{n-1}$, ... $p_{n-1}'=p_1$ and $p_n'=p_0$. Therefore, $c' = (r_0(p_0))^{-1} s_n^{-1} ... s_1^{-1} r_0(p_0) = ((r_0(p_0))^{-1} s_1 ... s_n r_0(p_0))^{-1} = (r_0(p_0))^{-1} r'(p_0))^{-1}=c^{-1}=c$.

Thus, $c$ does not depend on the travel direction.
\end{proof}

\begin{figure}[!t]
	\centering
	\begin{overpic}[width={0.6\columnwidth}]{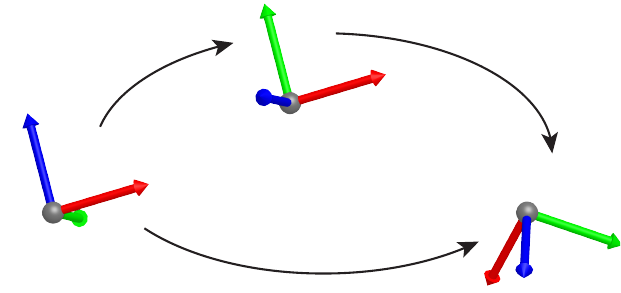}
		\put(140, 20){$f_0(p_0)$}
		\put(30, 60){$P^{-1}$}
		\put(70, 65){$S$}
		\put(130, 50){$r_{0,S}$}
		\put(20, 30){$T$}
		\put(60, 10){$r_{0,T}$}
	\end{overpic}
	\caption{This figure illustrates the change of the basis of $S$, $T$, and $f_0(p_0)$.
	}
	\label{fig:lemma_1}
\end{figure}

\begin{lemma}
	Given the conditions above, $c$ is independent of the choice of the starting point $p_0$.
	\label{lemma:free_starting_point}
\end{lemma}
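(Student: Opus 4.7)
The plan is to fix a second starting point $q_0 \in \partial R$ distinct from $p_0$ and show that the boundary walk beginning at $q_0$ returns the same element $c \in \{1, r_x, r_y, r_z\}$ as the walk beginning at $p_0$. First I would split $\partial R$, oriented in the chosen direction of travel, into two arcs: $\gamma_1$ going from $p_0$ to $q_0$ and $\gamma_2$ going from $q_0$ back to $p_0$. The full boundary walk from $p_0$ is then the concatenation $\gamma_1\gamma_2$, and the walk from $q_0$, in the same direction, is $\gamma_2\gamma_1$.

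Next I would package the construction into a continuous eigenframe transport $T_\gamma$ along any arc $\gamma\subset\partial R$: because $\partial R$ contains no degenerate points, an admissible right-handed eigenframe at the start of $\gamma$ deforms uniquely and continuously to an admissible eigenframe at the end. The one structural property I need from this transport is compatibility with the four-fold sign ambiguity, namely that $T_\gamma(f\cdot r) = T_\gamma(f)\cdot r$ for every $r \in \{1, r_x, r_y, r_z\}$. This is the main obstacle of the proof, but it follows from the fact that the four admissible frames at each point of $\partial R$ differ by right multiplication by these fixed rotations, this labelling is locally constant away from degenerate points, and the Klein four-group $\{1, r_x, r_y, r_z\}$ is abelian; a sign flip imposed at the starting frame therefore propagates untouched through the entire transport.

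With this tool in hand, the rest is bookkeeping. Starting at $p_0$ with frame $f_0(p_0)$, the walk gives $T_{\gamma_2}(T_{\gamma_1}(f_0(p_0))) = f_0(p_0)\cdot c$ by the definition of $c$. Now I would choose the initial frame at $q_0$ to be $g_0 := T_{\gamma_1}(f_0(p_0))$, which is one of the four admissible eigenframes there. Traversing $\gamma_2$ first gives $T_{\gamma_2}(g_0) = f_0(p_0)\cdot c$ at $p_0$, and then traversing $\gamma_1$ gives $T_{\gamma_1}(f_0(p_0)\cdot c) = T_{\gamma_1}(f_0(p_0))\cdot c = g_0\cdot c$ by the commutativity just established. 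Hence the walk from $q_0$ also returns the initial frame right-multiplied by the same $c$. Finally, Lemma~\ref{lemma:free_initial_frame} removes the dependence on the particular initial frame $g_0$ at $q_0$, yielding the desired independence from the starting point.
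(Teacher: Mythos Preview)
Your proposal is correct and takes essentially the same approach as the paper: both choose as initial frame at the new starting point the one transported from $f_0(p_0)$ along the first arc, and both show that the cyclically reordered boundary walk yields the same $c$; the paper simply carries this out via a direct computation with the discrete incremental rotations $s_\ell$ rather than packaging it as a transport map and invoking Lemma~\ref{lemma:free_initial_frame}. One minor remark: your equivariance $T_\gamma(f\cdot r)=T_\gamma(f)\cdot r$ holds because the transport along an arc acts by \emph{left} multiplication by some fixed rotation, and left and right multiplications commute by associativity in any group---so the abelian property of the Klein four-group is not actually what makes this step work.
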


\begin{proof}
Let $p_0 \ne p_0'$ be two points on $\partial R$. Choose $f_0(p_0)$ as the initial frame. Let $p_1$, ... $p_n$ be a sequence of points on $\partial R$ described in Lemma~\ref{lemma:free_initial_frame} (Figure~\ref{fig:thm_2} (a-b)). Moreover, assume that $p_k=p_0'$ for some $1 \le k < n$. Using the process described Lemma~\ref{lemma:free_initial_frame}, we can find $f_0(p_0')$ such that $r_0(p_0')=s_{k} s_{k-1}...s_1 r_0(p_0)$.

We now consider travelling starting from $p_0'$ in the sequence of $p_{k+1}, .. p_{n-1}, p_0, ... p_{k-1}, p_k=p_0'$. Then, $r'(p_0') = s_k s_{k-1} ... s_1 s_n...s_{k+2} s_{k+1}r_0(p_0')$. Therefore, $c'=(r_0(p_0'))^{-1} r'(p_0') = (s_{k} s_{k-1}...s_1 r_0(p_0'))^{-1} (s_k s_{k-1} ... s_1 s_n...s_{k+2} s_{k+1}r_0(p_0')) = (r_0(p_0))^{-1} s_n ... s_{k+1} r_0(p_0') = (r_0(p_0))^{-1} s_n ... s_{k+1} (s_k .. s_1 r_0(p_0)) = c$.

Therefore, $c$ does not depend on the choice of the initial point $p_0$.
\end{proof}

\begin{lemma}
Given two topological disks $R_1$ and $R_2$ that intersect only at their common boundary such that $R_1\bigcup R_2$ is still a topological disk, the winding number of the boundary of $R_1 \bigcup R_2$ is the product of the winding numbers of $R_1$ and $R_2$.
\label{lemma:product}
\end{lemma}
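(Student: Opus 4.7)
The plan is to exploit the fact that the shared boundary arc between $R_1$ and $R_2$ is traversed in opposite directions within $\partial R_1$ and $\partial R_2$, so that the frame transports along it cancel, leaving only the transports along the outer arcs, which together form $\partial(R_1 \cup R_2)$.

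I would first fix notation: let $\sigma$ denote the common boundary arc with endpoints $p_0$ and $p_1$, and let $\alpha$ and $\beta$ denote the closures of $\partial R_1 \setminus \sigma$ and $\partial R_2 \setminus \sigma$, respectively. Orienting each $\partial R_i$ counterclockwise with respect to $R_i$, orientability of the gluing forces $\sigma$ to appear in opposite directions within $\partial R_1$ and $\partial R_2$. Thus, based at $p_0$, I can write $\partial R_1 = \alpha \cdot \sigma$ with $\alpha$ going from $p_0$ to $p_1$ and $\sigma$ going from $p_1$ back to $p_0$; $\partial R_2 = \sigma^{-1} \cdot \beta$; and $\partial(R_1 \cup R_2) = \alpha \cdot \beta$. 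By Lemma~\ref{lemma:free_starting_point}, all three winding numbers $c_1$, $c_2$, and $c$ may be computed with $p_0$ as the base point, and by Lemma~\ref{lemma:free_initial_frame} with $f_0(p_0)$ as the initial frame.

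Next, I would introduce the continuous eigenframe transport $T_\gamma$ along any degenerate-point-free arc $\gamma$: given an initial right-handed eigenframe at the start of $\gamma$, $T_\gamma$ returns its unique continuous extension to the endpoint. Two properties are crucial for the argument: equivariance with respect to the Klein four-group, $T_\gamma(f \cdot r) = T_\gamma(f) \cdot r$ for $r \in \{1, r_x, r_y, r_z\}$, and cancellation under reversal, $T_{\sigma^{-1}} \circ T_\sigma = \mathrm{id}$. Both follow from the observation that along $\gamma$ the four possible right-handed eigenframes form four globally disjoint continuous sections, so the continuous lift is uniquely determined by its starting value and commutes with the Klein-four action.

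The conclusion is then a short bookkeeping chase. By definition $T_\sigma(T_\alpha(f_0(p_0))) = f_0(p_0) \cdot c_1$, $T_\beta(T_{\sigma^{-1}}(f_0(p_0))) = f_0(p_0) \cdot c_2$, and $T_\beta(T_\alpha(f_0(p_0))) = f_0(p_0) \cdot c$. Setting $g := T_\alpha(f_0(p_0))$ and applying the two properties gives $T_{\sigma^{-1}}(f_0(p_0)) = g \cdot c_1^{-1}$, whence $T_\beta(g) \cdot c_1^{-1} = f_0(p_0) \cdot c_2$, i.e.\ $T_\beta(g) = f_0(p_0) \cdot c_2 c_1 = f_0(p_0) \cdot c$. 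Hence $c = c_2 c_1$, which equals $c_1 c_2$ since $\{1, r_x, r_y, r_z\}$ is abelian. The main obstacle is the careful justification of the equivariance and reversal properties of $T_\gamma$, which in turn reduces to verifying that the four frame lifts along a degenerate-free arc are globally disjoint; granting this, the combinatorial step above is straightforward.
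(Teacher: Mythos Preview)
Your argument is correct and is, at heart, the same idea as the paper's: base at a point on the common boundary and observe that traversing $\partial(R_1\cup R_2)$ amounts to traversing $\partial R_1$ and $\partial R_2$ in succession, so the resulting Klein-four element is the product. The paper's proof is very terse---it simply asserts that the outer boundary walk ``is equivalent to traveling around the boundary of $R_1$ and $R_2$ once each'' and then invokes multiplicativity of concatenated rotations---whereas you make the mechanism explicit by decomposing each boundary into the shared arc $\sigma$ and the outer arcs $\alpha,\beta$, introducing the transport operator $T_\gamma$, and verifying the cancellation $T_{\sigma^{-1}}\circ T_\sigma=\mathrm{id}$ together with Klein-four equivariance. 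Your version is more careful (in particular, the paper's equation $\partial(R_1\cup R_2)=\partial R_1\cup\partial R_2$ is literally false when the disks share an arc, and your decomposition fixes this), but conceptually both proofs are the same homotopy-cancellation argument.
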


\begin{proof}
Note that the boundary $\partial (R_1\bigcup R_2)=\partial R_1 \bigcup \partial R_2$. Let $p_0 \in \partial R_1 \bigcap \partial R_2$ be the starting point and travel the boundary of $R_1\bigcup R_2$. This is equivalent to traveling around the boundary of $R_1$ and $R_2$ once each, passing through $p_0$ once before returning to it a second time. Since the quaternion of a 3D rotation from concatenating two 3D rotations is the product of the quaternions for the two 3D rotations, the winding number of the boundary of $R_1 \bigcup R_2$ is the product of the respective winding numbers of the boundary of $R_1$ and the boundary of $R_2$.
\end{proof}

\begin{lemma}
Given a 3D tensor field $T(x, y, z)$ and $R$, a topological ball on which $T(x, y, z)$ is linear, the major eigenvector field of $T(x, y, z)$, a 3D line field, can be turned into a 3D vector field inside $R$. Similarly, if $T(x, y, z)$ is planar on $R$, then the minor eigenvector field of $T(x, y, z)$ can be turned into a 3D vector field.
\label{lemma:line_field_to_vector_field}
\end{lemma}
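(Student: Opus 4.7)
The plan is to exploit the simple connectedness of the topological ball $R$ to lift the continuous dominant eigenvector line field to a continuous unit vector field via the standard orientation double cover construction.

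First I would verify that the major eigenvector line field is continuous on all of $R$. In a linear region the mode satisfies $\mu(T) \in (0, 1]$, so the deviator of $T$ has a unique positive eigenvalue $\lambda_1$ that is strictly larger than the other two eigenvalues $\lambda_2, \lambda_3$ at every point of $R$, including at any linear degenerate points where $\lambda_2 = \lambda_3$. Because $T$ and its deviator share eigenvectors, the major eigenspace of $T$ is one-dimensional throughout $R$, and by standard perturbation theory for a simple isolated eigenvalue of a continuously varying symmetric matrix the associated line depends continuously on $T$. Hence the major eigenvector line field $L: R \to \mathbb{R}P^2$ is continuous.

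Next I would form the tautological orientation double cover
\[
\widetilde{R} = \{(p, v) \in R \times S^2 : v \text{ spans } L(p)\},
\]
with projection $\pi(p, v) = p$; each fiber is the antipodal pair $\{+v, -v\}$. Local triviality of $\pi$ follows from continuity of $L$: on any sufficiently small neighborhood $U \subset R$ the implicit function theorem applied to the characteristic polynomial at the simple root $\lambda_1$ produces a continuous unit eigenvector $v_0: U \to S^2$, giving $\pi^{-1}(U) \cong U \times \{-1, +1\}$ via $(p, \pm v_0(p)) \mapsto (p, \pm 1)$. Thus $\pi$ is a genuine $2$-fold covering map. Since $R$ is a topological ball it is simply connected, so every covering space of $R$ is trivial and $\widetilde{R}$ splits into two connected components, each carried homeomorphically onto $R$ by $\pi$. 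Selecting one component $\widetilde{R}_+$ and defining $V(p)$ to be the unique $v \in S^2$ with $(p, v) \in \widetilde{R}_+$ yields a continuous unit vector field $V: R \to \mathbb{R}^3$ whose value at each $p$ is a major eigenvector of $T(p)$.

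The planar case is entirely analogous: on a planar region $\mu(T) \in [-1, 0)$, so the minor eigenvalue is the unique negative eigenvalue of the deviator and hence simple, and the same orientation-double-cover argument lifts the minor eigenvector line field to a continuous vector field. The main technical step is the local selection of a continuous eigenvector, which simultaneously yields continuity of $L$ and local triviality of $\pi$; once this standard fact is in hand the rest of the argument is formal. I would also note that the resulting vector field is unique up to a single global sign, corresponding to the two choices of component of $\widetilde{R}$.
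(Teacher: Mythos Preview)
Your argument is correct. The paper's own proof is a one-line appeal to a theorem of Markus~\cite{Markus:55}: a line field without singularities on a simply connected, finite region lifts to a vector field. You instead unpack that result explicitly via the orientation double cover $\widetilde{R}\to R$, using simple connectedness of the ball to conclude the cover is trivial and hence admits a global section. The mathematical content is the same---both proofs rest on the fact that the major eigenvalue is simple throughout a linear region (so the line field is continuous and singularity-free) and that $\pi_1(R)=1$---but your version is self-contained and makes clear exactly where each hypothesis enters, whereas the paper treats Markus's theorem as a black box. Your additional remarks (continuity via perturbation theory for a simple eigenvalue, uniqueness of the lift up to a global sign) are correct refinements that the paper does not spell out.
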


\begin{proof}
Since $R$ is a topological ball, it is simply connected and finite. If $T(x, y, z)$ is linear on $R$, the major eigenvector field is always well-defined in $R$, i.e., without singularities. Thus, it can be turned into a 3D vector field from the result of Markus~\cite{Markus:55}, which states that a 3D line field can be turned into a 3D vector field on a simply-connected, finite region if the line field does not have any singularities in the region. The proof for the minor eigenvector field in planar regions is similar.
\end{proof}

\begin{theorem}
Given a 3D tensor field $T(x, y, z)$ and a point $p_0$ (possibly a degenerate point), there exists a small enough neighborhood $R$ of $p_0$ such that any topological disk inside $R$ has the same winding number for its boundary if the disk contains $p_0$ but no other degenerate points of $T(x, y, z)$ and has no self-intersection. In this case, the winding number is $\mathbf{i}$ if $p_0$ is a linear wedge, $-\mathbf{i}$ if $p_0$ is a linear trisector, $\mathbf{k}$ if $p_0$ is planar wedge, $\mathbf{-k}$ if $p_0$ is planar trisector, and $\mathbf{1}$ if $p_0$ is not a degenerate point.
\label{thm:winding_number_local}
\end{theorem}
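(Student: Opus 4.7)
The plan is to combine a homotopy-invariance principle for the quantity $c$ with an explicit computation on a carefully chosen canonical disk for each type of $p_0$.

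First, I would show that $c$, regarded as a unit quaternion, is invariant under admissible homotopies of the boundary loop. The quaternion $c$ is obtained by continuously lifting the loop $t \mapsto r_0(p_0)^{-1} r(t)$ in $SO(3)$ to its universal cover $SU(2) \cong S^3$, starting at $\mathbf{1}$. If $\{D_s\}_{s \in [0,1]}$ is a continuous family of topological disks, each containing $p_0$ as the only degenerate point of $T$ and with degenerate-point-free boundary, then the endpoint of this lift varies continuously in $s$ and lies in the discrete fiber $\{\pm\mathbf{1}, \pm\mathbf{i}, \pm\mathbf{j}, \pm\mathbf{k}\}$ of $\{r_0, r_1, r_2, r_3\}$ under the double cover $SU(2) \to SO(3)$. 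A continuous map from a connected interval into a discrete set must be constant, so $c(\partial D_0) = c(\partial D_1)$.

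Second, I would take $R$ small enough that $p_0$ is the unique degenerate point of $T$ in $R$ and, in the degenerate case, that the degenerate curve through $p_0$ meets $R$ in a single smooth arc transverse to the chosen canonical disk. Inside such $R$, any admissible disk can be joined to the canonical disk by an admissible homotopy: a local straightening chart turns the degenerate arc into a coordinate axis, and a radial shrinking deformation respects embeddedness, keeps $p_0$ interior, and keeps the boundary off the degenerate set. Combined with the previous step, this reduces the theorem to computing $c$ on the canonical disk.

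Third, I would carry out the canonical computation case by case. For non-degenerate $p_0$, a sufficiently small disk has all three eigenvalue fields distinct, so by Lemma~\ref{lemma:line_field_to_vector_field} the eigen line fields lift to continuous vector fields on the disk and the frame returns to itself along the loop, giving $c = \mathbf{1}$. For a planar wedge or trisector, I take the canonical disk inside the repeating plane at $p_0$: the dominant (minor) eigenvector $v_3$ is well defined on the punctured disk, is approximately normal to it, and returns to itself after one loop, while the projected 2D tensor field in the repeating plane exhibits a wedge (respectively trisector) at $p_0$, so the in-plane major eigenvector $v_1$ rotates by $+\pi$ (respectively $-\pi$) and $v_2 = v_3 \times v_1$ rotates accordingly. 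The induced path of rotations about the $v_3$-axis by $\pm\pi$ lifts in $SU(2)$ to $\cos(\pm\pi/2) + \mathbf{k}\sin(\pm\pi/2) = \pm\mathbf{k}$, matching the claim. The linear wedge and trisector cases are analogous, with $v_1$ approximately normal to the canonical disk and the $\pm\pi$-rotation taking place in the $v_2 v_3$-plane, whose lift is $\pm\mathbf{i}$.

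The main obstacle I expect is the second step: producing an explicit admissible homotopy from a generic admissible disk to the canonical disk while keeping the boundary free of degenerate points, the disk embedded, and $p_0$ strictly in the interior throughout. Pinning down what ``sufficiently small'' means for $R$ -- in particular, excluding nearby triple degenerate points and ensuring that the wedge/trisector/transition structure is stable and the degenerate curve intersects $R$ in a single transversal arc -- is where the quantitative care is required; everything else follows from the discreteness of the fiber of $SU(2) \to SO(3)$ over $\{r_0, r_1, r_2, r_3\}$ and the standard 2D wedge/trisector rotation count.
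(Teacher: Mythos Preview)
Your proposal is correct and reaches the same conclusion, but the mechanism differs from the paper's proof. The paper does not argue via homotopy invariance of a lift to $SU(2)$ and reduction to a canonical disk in the repeating plane. Instead, it invokes Lemma~\ref{lemma:line_field_to_vector_field} to orient the dominant eigenvector field near $p_0$ and then applies the Flow Box theorem to obtain a diffeomorphism $\phi$ under which that field becomes constant; choosing the eigenframe at the starting point as the coordinate system (via Lemma~\ref{lemma:free_coordinate_system}) forces every quaternion along the boundary walk to lie in the subalgebra $\{w+x\mathbf{i}\}$ (respectively $\{w+z\mathbf{k}\}$ in the planar case), so the endpoint is automatically one of $\pm\mathbf{1},\pm\mathbf{i}$ (respectively $\pm\mathbf{1},\pm\mathbf{k}$), with the sign read off from the 2D projected index. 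Your covering-space argument is cleaner topologically and makes the discreteness of the target explicit, and your canonical-disk computation ties directly to the 2D wedge/trisector definition; the paper's Flow Box straightening, on the other hand, dispenses with the need to build the admissible homotopy you flag as the main obstacle, since the constancy of the dominant eigenvector after straightening constrains the quaternion path for \emph{every} admissible loop in $R''$ simultaneously.
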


\begin{proof}
If $p_0$ is a linear degenerate point, then there exists a neighborhood $R$ of $p_0$ such that the major eigenvector field of $T(x, y, z)$ can be converted to a continuous vector field inside $R$ without singularities (Lemma~\ref{lemma:line_field_to_vector_field}). According to the Flow Box theorem~\cite{Calcaterra:2008}, there exists a region $R'\subset R$ and a diffeomorphism $\phi$ from $R'$ to another space $F$ such that the major eigenvector field (now a vector field on $R'$) to a constant vector field defined on $F$. Without loss of generality, we can assume that the major eigenvector at $p_0$ has the same length as its image under $\phi$, \ie a 3D rotation without scaling. Since $\phi$ is a diffeomorphism, it is continuous. Therefore, it is possible to find an even smaller set $R'' \subset R'$ such that the diffeomorphism $\phi$ inside $R''$ can be approximated by $\phi(p_0)$ with a sufficiently small error $\epsilon.$ Notice that under $\phi$, the tensor field $T' = \phi(T(x, y, z))$ is also a tensor field whose major eigenvectors are all parallel. In addition, $\phi(p_0)$ is a linear degenerate point of $T'$. Similarly, a topological disk $D$ containing $p_0$ will be mapped to a topological disk containing $\phi(p_0)$. When $\epsilon$ is small enough, the winding number of the boundary of $D$ is the same as the winding number of the boundary of $\phi(D)$ since $\phi(p_0)$ is a 3D rotation on the eigenvectors of the tensor field. We can further select $R''$ to be small enough such that any loop inside $R''$ is close to being planar, i.e., contained in some plane. Consequently, the image of such a loop under $\phi$ is also nearly a planar loop. We select a point $p_0$ and travel along the loop. Based on Lemma~\ref{lemma:free_coordinate_system}, we can choose any coordinate system and the winding number will not change. Thus, for simplicity, we choose the eigenframe at the start point $p_0$ to be coordinate system. Therefore, the quaternion for $p_0$ is $\mathbf{1}$. Since the major eigenvector field is constant along the loop, the quaternions corresponding to the eigenframes along the loop have the form $w+x\mathbf{i}$, i.e., no $\mathbf{j}$ and $\mathbf{k}$ components. Thus, when returning to $p_0$, the quaternion corresponding to $p_0$ must be $\pm \mathbf{1}$ or $\pm \mathbf{i}$. When the region $R$ contains no singularity, the winding number is $\mathbf{1}$. Otherwise, it is $\mathbf{i}$ if the singularity contained in $R$ is a wedge or $-\mathbf{i}$ if the singularity is a trisector.

Similarly, in a planar region, the winding number is $\mathbf{1}$ if the region contains no singularity, and is $\mathbf{k}$ or $-\mathbf{k}$ if the singularity is a wedge or a trisector, respectively.
\end{proof}

\begin{corollary}
Given a 3D tensor field $T(x, y, z)$ and a topological disk $R$ free of self-intersections, assume that $R$ contains only one degenerate point inside. If furthermore the normal to the surface $R$ is nowhere perpendicular to the dominant eigenvector field, then the winding number of the boundary $\partial R$ is the same as the index of the degenerate point.
\label{coro:global}
\end{corollary}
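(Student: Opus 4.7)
The plan is to reduce the global statement to the local index calculation of Theorem~\ref{thm:winding_number_local} by subdividing $R$ into finitely many small sub-disks and combining their winding numbers through iterated application of Lemma~\ref{lemma:product}. Since winding numbers live in the finite set $\{\mathbf{1}, \pm\mathbf{i}, \pm\mathbf{j}, \pm\mathbf{k}\}$, the argument collapses to combinatorial bookkeeping of which sub-disk actually sees the degenerate point $p_0$.

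First, I would invoke Theorem~\ref{thm:winding_number_local} pointwise: for every $q \in R$ there is an open neighborhood $U_q$ in which any topological sub-disk containing $q$ (and no other degenerate point, and free of self-intersection) has a fixed winding number --- namely $\phi(q)$ if $q$ is a degenerate point and $\mathbf{1}$ otherwise. By compactness of $R$, finitely many of these neighborhoods cover it. Via a sufficiently fine triangulation of $R$ subordinate to this cover (obtained, for instance, by repeated barycentric subdivision until every closed triangle lies inside some $U_q$), I obtain a decomposition $R = \bigcup_i T_i$ into topological sub-disks in which exactly one piece $T_0$ contains $p_0$ in its interior and lies in $U_{p_0}$, while every other piece $T_i$ lies in some $U_{q_i}$ with $q_i$ non-degenerate. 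Consequently $\partial T_0$ has winding number $\phi(p_0)$ and each other $\partial T_i$ has winding number $\mathbf{1}$.

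Next, I would glue the $T_i$'s back into $R$ by iterated application of Lemma~\ref{lemma:product}: at each step pick a pair of current sub-disks sharing a boundary edge and merge them, replacing their two winding numbers by the appropriate quaternion product. Such an edge-sharing pair always exists inside a triangulated disk, and after finitely many merges the result is $R$ itself. Because every factor except the one contributed by $T_0$ is the identity quaternion $\mathbf{1}$, which is two-sided neutral in the quaternion group, the non-commutativity of Lemma~\ref{lemma:product} poses no obstacle: any merge order produces $\phi(p_0)$ as the winding number of $\partial R$.

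The hard part will be ensuring that each intermediate union produced in the gluing process is genuinely a topological disk without self-intersection, as required by Lemma~\ref{lemma:product}. This is exactly where the transversality hypothesis is used: because the normal to $R$ is nowhere perpendicular to the dominant eigenvector field, $R$ is a genuine transverse section to that field, so the Flow Box parameterization underlying Theorem~\ref{thm:winding_number_local} can be applied in each $U_q$ in a way compatible with $R$, and the sub-disks $T_i$ together with their successive unions inherit clean, embedded planar topology. Without this transversality the dominant eigenvector could be tangent to $R$ along a curve, destroying both the Flow Box normal form and the clean disk structure of the sub-pieces. A final technical check is to choose the edges of the triangulation so that $\partial T_0$ avoids the degenerate curve through $p_0$, which is possible because, by transversality together with the uniqueness hypothesis on $R$, that curve crosses $R$ only at the single point $p_0$.
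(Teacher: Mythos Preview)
Your proposal is correct and follows essentially the same route as the paper's proof: both cover $R$ by the small neighborhoods supplied by Theorem~\ref{thm:winding_number_local}, pass to a finite subcover by compactness, decompose $R$ (respectively $\partial R$) into small pieces each lying in one chart, and then multiply the local winding numbers via Lemma~\ref{lemma:product}, with all factors but the one containing $p_0$ equal to $\mathbf{1}$. Your version is somewhat more explicit---you spell out the subordinate triangulation, the inductive gluing, the harmlessness of non-commutativity when all but one factor is the identity, and the precise role of the transversality hypothesis---whereas the paper leaves these points implicit, but the underlying argument is the same.
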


\begin{proof}
Given any point $p$ in $R$, there exists a sufficiently small neighborhood $U_p$ such that Theorem~\ref{thm:winding_number_local} is satisfied. These neighborhoods give an open cover of $R$. Since $R$ is finite and closed, any of its open covers has a finite subcover~\cite{Solr-1099267}. Consequently, we can find a finite neighborhood $U_1$, $U_2$, ... $U_m$ for some $m>0$ such that their union covers $R$. In Addition, $\partial R$, a loop, is covered by $U_1$, ... $U_m$. It is thus possible to decompose $\partial R$ as the union of a number of closed curves, each of which is inside one such neighborhood $U_k$ for some $1\le k\le m$. Therefore, the winding number of $\partial R$ is the product of the winding numbers of each of such closed curves (Lemma~\ref{lemma:product}). Since $R$ is nowhere perpendicular to the dominant eigenvector field, the dot product between surface normals (chosen consistently over $R$) and the dominant eigenvector field over $R$ is either always positive or always negative.

Since $R$ contains only one degenerate point, we can select the closed curves in the open cover such that the degenerate point is inside only one topological disk bounded by the closed curves. For this curve, the winding number is either $\mathbf{i}$, $-\mathbf{i}$, $\mathbf{k}$, or $-\mathbf{k}$ while for the other closed curves, the winding number is $\mathbf{1}$. Thus, the winding number of $\partial R$ is $\pm \mathbf{i}$ or $\pm \mathbf{k}$ due to Lemma~\ref{lemma:product}.
\end{proof}

\section{Jones Polynomial Computation}
\label{sec:jones}

In this section, we provide some detail on the technique of computing the Jones polynomials~\cite{livingston:1993:knot}, which we implement in our system. Recall that Jones polynomial of a given curve network is defined in terms of the link diagram of curve network, though it is an invariant as it does not depend on the actual choice of the plane onto which the curve network is projected.

As the Jones polynomial is defined recursively, computing it is an NP-hard problem in terms of the number of crossing points in the link diagram. Livingston~\cite{livingston:1993:knot} provides an approximation algorithm by involving the {\em Reidemeister moves} (Figure~\ref{fig:Knot_Reidemeister}), which can reduce the number of crossing points without changing the polynomial itself. Note that the reconnection operations in the definition of the Jones polynomials (Figure~\ref{fig:jones} (a-2.1 and a-2.2)) of a curve network generates two new curve networks, each of which has one fewer crossing points than the original curve network. However, the Jones polynomials of the two new networks are usually different from that of the original one, hence the exponential growth of the computation time in terms of the number of crossing points in the original curve network.

\begin{figure}[!t]
	\centering
	\begin{overpic}[width={0.9\columnwidth}]{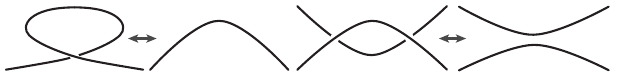}
		\put(50, -8)  {\small Type I}
		\put(160, -8)  {\small Type II}
	\end{overpic}
	\caption{The two types of Reidemeister moves that can reduce the number of crossing points in a link diagram.
	}
	\label{fig:Knot_Reidemeister}
\end{figure}

We follow closely the technique of Livingston~\cite{livingston:1993:knot}. Given a degenerate loop in the 3D space, we apply the principal component analysis on the points on the loop, which gives us a new coordinate system. We next project the curve onto the $XY$-plane of the new coordinate system and ensure we have a regular projection where there are no overlapping edges and no three points projected to the same point on the plane~\cite{livingston:1993:knot}. If it is not a regular projection, we apply a small but random 3D rotation to the coordinate system from the principal component analysis. Since irregular projections are structurally unstable, an arbitrarily small perturbation in the coordination system can usually generate a regular projection. We then construct the link diagram by tracing the crossings on the projected curve. Since the complexity of the Jones polynomial computation is $O(2^n)$ for $n$ crossing points, we simplify the link diagram by reducing the number of crossings points with types I and II Reidemeister moves. This involves the computation of {\em braid groups}, and we refer our readers to~\cite{livingston:1993:knot} for more detail on this part of the algorithm. Lastly, we compute the Jones polynomial of the simplified curve network by applying the first and second type of simplification rules~\cite{livingston:1993:knot}. We iteratively perform the following two steps. First, any loop in the network free of crossing points with the rest of the networks is removed. When no such loop exists in the remaining network, we remove a crossing with two local reconnection. This leads to two new networks, which are sent to the same routine to compute their respective Jones polynomial. This recursion will eventually lead to the Jones polynomial of the original curve network. Note that we only compute the Jones polynomial for individual degenerate loops for knot identification and classification, even though the computation of the Jones polynomial of a loop may result in a curve network in the middle of the computation due to the simplification rules. Finally, when the Jones polynomial is not a constant, we regard the degenerate loop as a knot and add $\ast$ to the corresponding node in the topological graph.

\section{Metal Hook}
\begin{wrapfigure}{r}{0.25\columnwidth}
	\vspace{-2em}
	\begin{center}
		\includegraphics[width=0.25\columnwidth]{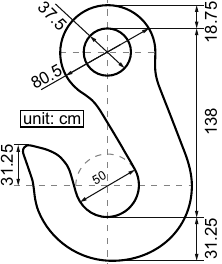}
	\end{center}
	\vspace{-2em}
\end{wrapfigure}

Metal hooks are essential to the lifting mechanism used in construction and transportation engineering.  They are often used to lift or pull different kinds of goods that are of varying sizes and shapes.  To be able to create cost-effective designs for these hooks, it is important to evaluate their use under different loading conditions.  Here we contrast two types of loading conditions for the metal hook example published in SIMULIA $2020$.  The first type loads the hook from one angle that is shifted to the left from the vertical direction with $1$ kN and the second type adds a load pulling in the horizontal direction with $1$ kN.  The latter simulates a use where the hook carries more than one load.  We note that the two scenarios produce very different numbers of regions and degenerate curves.   The Betti numbers for the single load scenario are higher for the planar and the linear regions.  This was not expected due to the simple load and could not be observed without the topological graph.  Moreover, for the second scenario with two loads, there is no region containment as observed for the first scenario in which two planar regions each contains a linear region.  This comparison inspires more testing to corroborate the general practice where multiple loads are sometimes applied to ensure little rotational movement of the hook to avoid swinging or tipping.
\begin{figure}[!t]
	\centering
	\begin{overpic}[width={\columnwidth}]{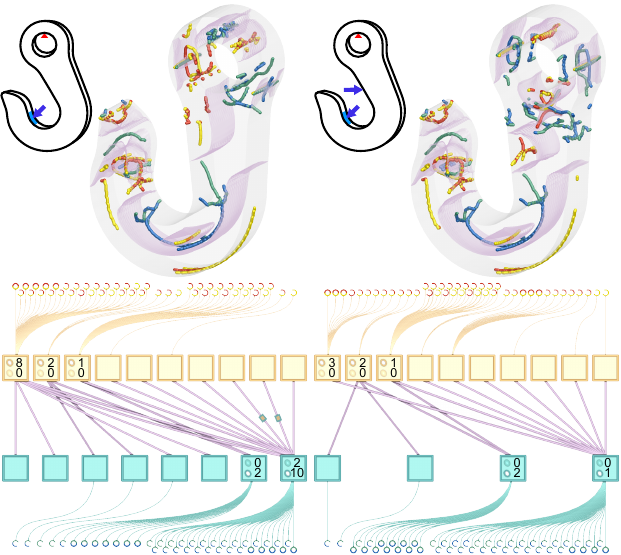}
		\put(50, -8)  {\small(a)}
		\put(170, -8)  {\small(b)}
	\end{overpic}
	\caption{ This figure shows the simulated metal hook data with either a single-load scenario (a) or a multi-load scenario (b).
	}
	\label{fig:hook}
\end{figure}

\section{Eight Pile Group Foundation with Cap}
\begin{figure}[!t]
	\centering
	\includegraphics[width=0.9\columnwidth]{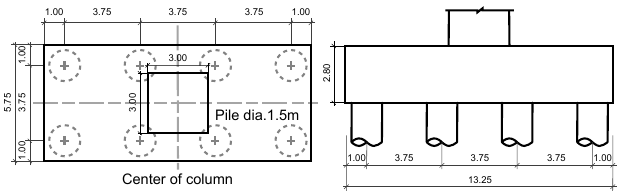}
	\caption{In this eight pile group, there is a cap that is $13.25 \times 5.75 \times 2.80$ volume meters on top of the piles. Each pile has a diameter of $1.5$ meters.  The cap is attached to the piles with no movement allowed. The ends of the piles are fixed.
	}
	\label{fig:pile8_setting}
\end{figure}

\begin{figure}[!t]
	\centering
	\begin{overpic}[width={\columnwidth}]{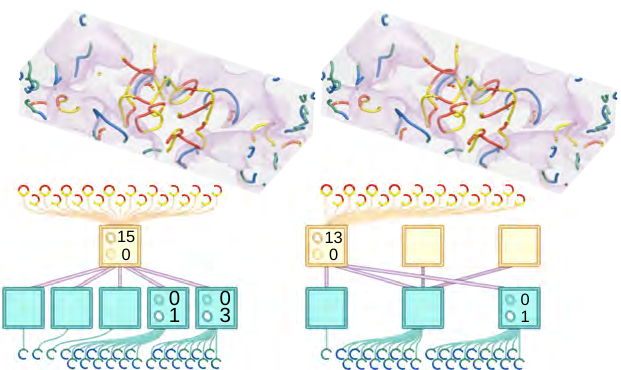}
		\put(20,   -7) {\small(a) Poisson's ratio $0.13$}
		\put(130, -7)  {\small(b) Poisson's ratio $0.24$}
	\end{overpic}
	\caption{Tensor field topology and its topological graph for the pile cap in an eight pile group foundation.   The first case (left) has its four center piles at a lower Poisson's ratio ($0.13$) and its four piles at the ends at a higher Poisson's ratio ($0.20$).  The second case (right) has its four center piles at a higher Poisson's ratio ($0.24$) while maintaining the Poisson's ratios of the four piles at the ends at $0.20$.
	}
	\label{fig:pile8_1}
\end{figure}

Pile groups are effective foundation structures that support buildings or bridges~\cite{chaimahawan:2021:finite}.  Figure~\ref{fig:pile8_setting} illustrates an eight pile group. Knowing how the load distributes and how the material deforms is important to structure integrity evaluation and maintenance scheduling.  The most common material used for these pile groups is concrete which is of crushed stones, sand and water. The mixing causes the concrete to have different Poisson's ratios; that is, a pile group may have a range of Poisson's ratios for its piles.  Here we contrast two cases where the variation of Poisson's ratios is different. The first case has $0.13$ for the center $4$ piles and $0.20$ for the $4$ piles on the ends, while the second increases the Poisson's ratios for the middle $4$ piles to $0.24$. The latter simulates a more incompressible center than the ends and the former is the reverse.  We add a vertical load, $1000$ kN, at the center top ($3 \times 3$ square meters) and a small periodic loading to the sides of the pile cap.

\begin{wrapfigure}{r}{0.5\columnwidth}
	\vspace{-2em}
	\begin{center}
		\includegraphics[width=0.5\columnwidth]{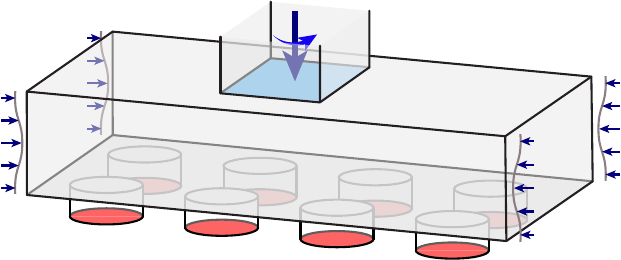}
	\end{center}
	\vspace{-2em}
\end{wrapfigure}

Using the topological graphs (Figure~\ref{fig:pile8_1}), we can observe that there are more linear regions representing extensions than the planar regions representing compression for the first case while there are the same number of linear and planar regions in the second case.  The Betti numbers are higher for the regions in the first case.   Conjectures such as having a more incompressible center group, i.e., higher Poisson's ratio for the center piles, leads to a more uniform material behavior for the cap become plausible; however, only extensive studies can warrant these statements. Our topological graphs can aid in establishing these conjectures to provide practical guidance on the pile arrangement for long-lasting concrete foundations.

\begin{figure*}[!t]
	\centering
	\begin{overpic}[width={0.9\textwidth}]{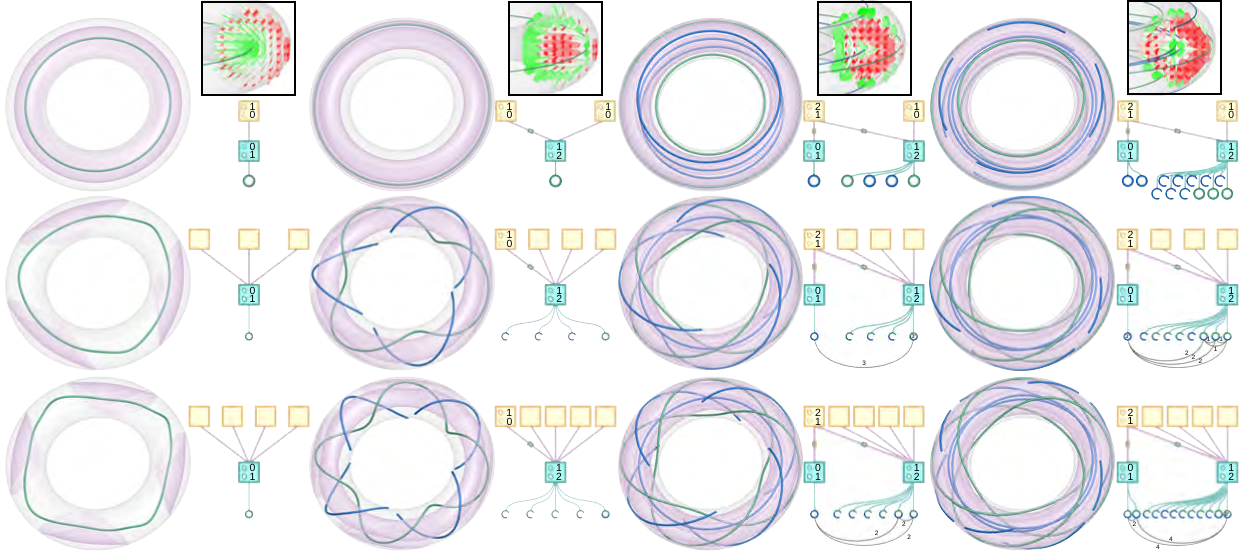}
		\put(  30, -7)  {\small$q = 1$}
		\put(150, -7)  {\small$q = 2$}
		\put(260, -7)  {\small$q = 3$}
		\put(375, -7)  {\small$q = 4$}
		
		\put(-20, 165)  {\small$p = 0$}
		\put(-20, 100)  {\small$p = 3$}
		\put(-20, 25)  {\small$p = 4$}
	\end{overpic}
	\caption{We evaluate the compression force of  $p = 0, 3, 4$, $q = 1, 3, 4$, and $\alpha=0.3$. For each scenario, we show the tensor field features on the left and the topological graph on the right. As $p$ increases, the number of the features such as degenerate curves increases due to the periodicity of the compression force. On the other hand, As $q$ increases, trisector degenerate curves start to appear. We also show the glyphs on a cross-section for the case of $p=0$ to indicate the eigenvector fields.
	}
	\label{fig:Oring_9}
\end{figure*}

\section{Additional O-ring Analysis}
\label{sec:oring_more}

In this section, we add some discussions on the results from varying the periodicity of loading on the O-ring.  We provide a group of examples while varying $p$ and $q$ in Equation~\ref{eq:traction_force}.

Our results are shown in Figure~\ref{fig:Oring_9} as an array of sub-figures.  Along the vertical direction, the value of $p$ increases and along the horizontal direction, $q$ increases.  We first note that for high values of $p$ and $q$, the topological graphs are more complex in terms of increased number of regions and degenerate curves.  For $q=1$ which is shown in the leftmost column, there is no region containment.  For $q=2$, a new planar region appears for each value of $p$, and it resides inside the green region.  One of the straight purple edges representing the neutral surfaces is encoded with the glyph that shows containment.  For $q=3$, this planar region starts to host a linear region, which itself contains another planar region.  The appearance of the nested regions is a direct response of material deformation to the boundary condition change.  Furthermore, linking among linear degenerate curves starts to form.  In particular, for $p=q=3$, a trefoil appears.  As $q$ increases to $4$, this trefoil breaks and many more degenerate curves appear and link.   For $p=q=4$, a knot with a Writhe number $8$ appears.  More complex linking and knotting behaviors are expected as $p$ and $q$ increase.  During this deformation, degenerate curves that are either extension or compression dominant will intertwine with each other.  The knottiness may indicate a match of the physical domain boundary with the high stress loading spots that enclose the same material deformation behavior entirely inside the physical domain.  In future work, we plan to continue with the experiments for many more scenarios to detect patterns and mathematical reasons.

\end{document}